\tikzstyle{block} = [draw, fill=white, rectangle,
\tikzstyle{sum} = [draw, fill=white, circle, node distance=1cm]
\tikzstyle{input} = [coordinate]
\tikzstyle{output} = [coordinate]
\tikzstyle{pinstyle} = [pin edge={to-,thin,black}]
\def\D{{\mathcal D}}
\def\F{{\mathcal F}}
\def\H{{\mathcal H}}
\def\C{{\mathcal C}}
\def\P{{\mathcal P}}
\def\S{{\mathcal S}}
\def\cE{\mathbb{E}}
\def\bpi{{\boldsymbol \pi}}
\def\bgamma{{\boldsymbol \gamma}}
\def\Mar{{\mathsf{Mar}}}
\def\Target{{\mathsf{Target}}}
\def\Inv{{\mathsf{Inv}}}
\def\sTr{{\mathsf{Tr}}}
\def\sX{{\mathsf X}}
\def\sA{{\mathsf A}}
\def\sM{{\mathsf M}}
\def\sE{{\mathsf E}}
\def\sU{{\mathsf U}}
\def\cL{\mathcal{L}}
\def\R{\mathbb{R}}
\def\rC{\mathbb{C}}
\def\N{{\mathcal N}}
\def\rP{\mathbb{P}}
\def\Id{\mathbb{Id}}
\def\diam{\mathop{\rm diam}}
\def\dim{\mathop{\rm dim}}
\def\diag{\mathop{\rm diag}}
\def\tr{\mathbb{\rm Tr}}
\def\T{\mathbb{\rm T}}
\def\L{\mathbb{\rm L}}
\def\cC{\mathop{\rm C}}
\def\argmin{\mathop{\rm arg\, min}}
\newtheorem{example}{Example}
\begin{document}

\sloppy
\title{Quantum Markov Decision Processes: General Theory, Approximations, and Classes of Policies}
\author{Naci Saldi, Sina Sanjari, and Serdar Y\"uksel\thanks{Naci Saldi is with Department of Mathematics at Bilkent University, Ankara, Turkey, Email: \{\email{naci.saldi@bilkent.edu.tr}\}. Sina Sanjari is with the Department of Mathematics and Computer Science at Royal Military College, Kingston, ON, Canada, Email: \{\email{sanjari@rmc.ca}\}.
Serdar Y\"uksel is with the Department of Mathematics and Statistics, Queen's University, Kingston, ON, Canada,
Email: \{\email{yuksel@queensu.ca}}\}.}
\maketitle

\begin{abstract}
In this paper, the aim is to develop a quantum counterpart to classical Markov decision processes (MDPs). Firstly, we provide a very general formulation of quantum MDPs with state and action spaces in the quantum domain, quantum transitions, and cost functions. Once we formulate the quantum MDP (q-MDP), our focus shifts to establishing the verification theorem that proves the sufficiency of Markovian quantum control policies and provides a dynamic programming principle.  Subsequently, a comparison is drawn between our q-MDP model and previously established quantum MDP models (referred to as QOMDPs) found in the literature. Furthermore, approximations of q-MDPs are obtained via finite-action models, which can be formulated as QOMDPs. Finally, classes of open-loop and classical-state-preserving closed-loop  policies for q-MDPs are introduced, along with structural results for these policies. In summary, we present a novel quantum MDP model aiming to introduce a new framework, algorithms, and future research avenues. We hope that our approach will pave the way for a new research direction in discrete-time quantum control.
\end{abstract}

\begin{keywords}
Markov decision processes, quantum control, discounted cost, linear programming, semi-definite programming.
\end{keywords}

\begin{MSCcodes}
93E20, 81Q93 
\end{MSCcodes}

\section{Introduction}\label{sec0}

In this paper, we develop a novel mathematical formulation of quantum MDPs with various quantum control policies. While doing so, we place the classical MDP theory and recent studies on quantum MDPs in the general framework presented.

For the classical setup, MDPs (and their partially observable and decentralized counterparts) have proven to be versatile models for mathematical analysis and applied relevance for a variety of areas such as engineering, economics, and biology \cite{HeLa96}. In MDP theory, the primary objective revolves around selecting an optimal control policy, typically geared towards minimizing a cost function, with common choices being the discounted cost, average cost, or finite horizon cost criteria. In this paper, we focus on the discounted cost scenario. There exist two well-established approaches for solving such problems, namely the dynamic programming method and the linear programming method. These two methods are intricately connected, with the former also to be considered as the dual of the latter \cite{HeLa96}.

In the literature, however, a quantum counterpart has not been studied in complete generality despite the emergence of many recent studies in a variety of contexts \cite{BaBaAa14,YiYi18,YiFeYi21,SaYu22} as we review in the following. The first and most well-known work that extends the classical MDP model to the quantum domain is \cite{BaBaAa14}. In this paper, the authors introduce quantum observable Markov decision processes (QOMDPs) as a quantum counterpart to partially observable Markov decision processes (POMDPs). In a QOMDP, the state is represented by a density operator and the set of actions is a finite collection of super-operators (quantum channels) that can be applied to these states. In a QOMDP, a chosen super-operator is decomposed into its Kraus operators, with each Kraus operator representing an observation. Therefore, the next state is determined based on the outcome of the chosen super-operator and the cost is paid depending on the chosen action and the current state. In this work, the authors primarily compare complexity-related problems between classical and quantum settings.  In the future work section of \cite{BaBaAa14}, authors pose the following problem:

\begin{quote}
"We also proved computability results, but did not consider algorithms for solving any of the problems we posed beyond a very simple PSPACE algorithm for policy existence. Are there quantum analogues of POMDP algorithms or even MDP ones?". 
\end{quote}

\noindent Indeed, the ultimate objective of this work is to construct quantum counterparts of the dynamic programming and the linear programming methods, with the latter to be established in a forthcoming paper. These quantum analogues serve as tools for calculating optimal policies and optimal value functions of q-MDPs for various classes of policies. Therefore, we provide a comprehensive solution to the aforementioned problem, albeit employing a more general quantum MDP model.

Another closely related paper is \cite{YiYi18}, where the authors introduce another quantum version of the MDPs. In this version, in addition to measurement super-operators as in QOMDPs, authors also utilize indivisible super-operators to model the state evolution. Namely, in this model, the state evolves with respect to chosen indivisible super-operators, but at times, the controller is allowed to take measurements on the system state via divisible super-operators. The objective of this paper is also centered around some complexity-related problems.

Another paper that addresses the quantum version of MDPs is \cite{YiFeYi21}. This paper is indeed the most closely related to ours in terms of the objective, which is computing optimal policies and value functions. In their model, the state evolves as follows: Given the current state, represented as a density operator on some finite-dimensional Hilbert space, an action is chosen from a finite set. Then, the state transitions to an intermediate state through a super-operator parameterized by the set of actions. Finally, a measurement is conducted on the intermediate state to obtain the reward function as well as the next state. 

Since the models mentioned above share similarities, we refer to them collectively as QOMDPs, using this as an umbrella term. It is worth noting that QOMDP and q-MDP models exhibit some distinctions. Firstly, one can formulate the QOMDP problem as a classical MDP with an uncountable compact state space (i.e. the set of density operators on some finite dimensional Hilbert space) and finite action space. Consequently, classical methodologies like dynamic programming and linear programming could be applicable, perhaps with some discrete approximation of the state space. Unfortunately, in q-MDP context, direct application of classical techniques is not feasible; they need to be adapted to suit our setup. This is true especially for the linear programming formulation.

Additionally, since one can write QOMDP as a classical MDP with uncountable compact state space, one could further transform this classical MDP into a deterministic one by elevating the state and action spaces to the set of probability measures on the state and state-action spaces. Then, this formulation could be further transformed into q-MDP since classical deterministic MDPs over probability distributions can always be represented as a special case of our quantum model. Hence, we can view the QOMDPs as a specialized instance of q-MDPs. However, a challenge arises here: the corresponding Hilbert space for the state must be an infinite-dimensional Hilbert space, a scope not covered in this paper. However, we can handle this complexity by approximating the classical version of QOMDP with finite state MDPs, whose convergence can be established to the underlying MDP via the results in \cite{SaYuLi17}. We can then embed the finite-state approximate MDPs into q-MDP formulation with finite-dimensional Hilbert spaces, enabling us to approximately model QOMDP as a special case of the q-MDP with finite dimensional Hilbert spaces. 

In \cite{SaYu22}, the quantum generalization of static decentralized control systems is discussed. Since the model is static, it lacks state dynamics and therefore lacks dynamical difference equations, unlike the MDP problem. In this generalization, the primary objective is to induce additional correlations among decentralized decision-makers by allowing them to share an entangled state and perform independent quantum measurements on their respective parts of the entangled state. This approach enables the creation of extra correlations that can improve system performance. We can consider this generalization of static decentralized control systems as semi-quantum, as everything in the model remains unchanged except for the conditional distribution of each agent's action given its observation, which is determined by the outcome of the quantum measurement. Nonetheless, our paper's quantum MDP framework can be generalized to such multi-agent and decentralized dynamic quantum formulations by unifying the model presented in our paper and the space of decentralized control policies studied in \cite{SaYu22}.

There have also been numerous fundamental contributions in the control theory literature concerning the optimal control of quantum systems. These works predominantly focus on the continuous-time setting \cite{AlTi12,Dal08,DoPe22}. As evident from classical optimal control theory, despite the conceptual similarities between approaches developed for continuous-time and discrete-time systems, there exist significant technical distinctions. For instance, in the continuous-time scenario, optimal control policies are often derived by solving a partial differential equation -- the Hamilton-Jacobi-Bellman (HJB) equation -- associated with the system. To ensure the existence of well-behaved solutions to this equation, stringent regularity assumptions must be imposed on the system components. Conversely, in discrete-time, optimal control policies are typically obtained through dynamic programming or other discrete-time optimization methods like linear programming. In this context, there is no need to impose stringent regularity conditions on the system components; generally, mere continuity of the system components suffices. Consequently, one might anticipate a similar discrepancy in the quantum domain.

Furthermore, there are also conceptual differences between the methodologies employed in control theory literature and our approach to modeling quantum control systems. In conventional control theory literature, a quantum control system is typically considered a direct extension of deterministic control models, expressed through controlled differential equations. In the quantum domain, the governing differential equation for quantum dynamics is the controlled Schr\"{o}dinger equation. 

If the density operator, controlled by the Schr\"{o}dinger equation, only represents the state of the quantum system, it leads to closed quantum control systems, implying no interaction with the environment. This is due to the corresponding semi-group of operators controlling the evolution of the state being unitary operators. When dealing with interactions between the quantum system and the environment (open quantum systems), the Schr\"{o}dinger equation is defined over the composition of the quantum system and the environment. Obtaining the reduced description of this state dynamics over the quantum system of interest can be done via partial trace operation, which in general results in non-Markovian dynamics \cite{AlTi12}. This is similar to classical partially observable MDPs, where the joint state of two subsystems forming a Markovian process does not imply that an individual state, which may be viewed as a measurement of the hidden Markov model, is itself Markovian. Non-Markovianity arises due to the system's interaction with an environment, where memory effects emerges in the reduced system. Although there has been some exploration into controlling non-Markovian dynamical systems, the majority of control-oriented literature has focused on the Markovian scenario due to its significant advantage of yielding simpler dynamical equations \cite{AlTi12}, where numerous methods from classical control theory are available to analyze such dynamical equations. Indeed, under appropriate assumptions, non-Markovian dynamics can be approximated by a Markovian structure, leading to a Markovian master equation in the Lindblad form \cite{Lin76}. For instance, in the weak coupling limit, one can derive the Markovian master equation, where the joint state of the quantum system and the environment is approximated by a product state, with the environment state remaining invariant under the Hamiltonian action \cite{BrPe07}. In our formulation, as we do not begin with a Hamiltonian description of the quantum system, the model lacks the memory effect that typically leads to non-Markovian dynamics.

In both closed quantum systems and Markovian approximations of open quantum systems, the quantum control model can be regarded as a classical control model over the set of density operators. Consequently, much of the work in quantum control focuses on classical control problems such as stability \cite{TiVi09} and controllability \cite{Alt03,Alt04}. Various papers also address optimal control of quantum systems \cite{PeDaRa88,ZhRa98,Dal08}. As the quantum control system is a direct extension of classical controlled models, well-established techniques from the optimal control literature, such as Pontryagin maximum principle \cite{BoSiSu221} or Hamilton-Jacobi-Bellman formalism \cite{GoBeSm05}, can be developed for those quantum systems by mimicking the classical case. However, addressing the optimal control problem for non-Markovian dynamics in open quantum systems poses a considerable challenge \cite{BiGo12}, given that most methods in classical control theory are developed under the Markovian assumption for state dynamics. This indeed constitutes a significant challenge in quantum control literature for open systems in continuous-time.

In contrast to the continuous-time framework, our approach does not directly extend the controlled stochastic difference equation governing the classical system's state to an equation over a set of density operators in discrete-time.  Instead, we initially lift our classical stochastic control system to the set of probability measures over state and state-action spaces. The dynamics of this extended system can be fully modeled using classical channels, avoiding the need for stochastic difference equations. Since density operators and quantum channels are quantum generalizations of probability measures and classical channels, we promptly derive the quantum version of this extended model. Consequently, our approach circumvents non-Markovianity in the dynamics and accommodates both closed and open systems without requiring distinct descriptions and interaction with the unknown environment. 

The objective of this paper is to develop a quantum version of a classical MDPs. To accomplish this, we begin by converting the classical MDPs into deterministic ones by lifting both the state and action spaces to the set of probability measures. Next, we substitute probability measures with density operators and replace the state transition channel with a quantum channel. This process result in a quantum version of the MDPs. After formulating the q-MDP, we establish a verification theorem that demonstrates the sufficiency of Markovian quantum controls and provides the dynamic programming principle. Subsequently, we conduct a comparison between our q-MDP model and  QOMDPs. Furthermore, we derive approximations for q-MDPs through finite-action models, where these approximations can be cast as QOMDPs. Lastly, we introduce classes of policies for q-MDPs such as open-loop policies and classical-state-preserving closed-loop  policies, and obtain the structural results of these policies.

\subsection{Contributions}
\begin{itemize}
\item[\ding{43}] Inspired by the deterministic reduction of classical MDPs, we introduce q-MDPs in Section~\ref{sec2new-2} with the most general setup. Here, the state is a density operator on a Hilbert space whose dimension is typically the cardinality of the state space of the classical MDP, the action is a density operator on a Hilbert space whose dimension is typically the cardinality of the state-action space of the classical MDP, and the state transition channel is a quantum channel from the latter Hilbert space to the former one. In Section~\ref{classical-q}, we demonstrate how d-MDPs can be represented as q-MDPs and introduce the quantum versions of classical policies.
\item[\ding{43}] In Section~\ref{sec2new-3}, we establish the verification theorem for q-MDPs (see Theorem~\ref{verification}). Specifically, we demonstrate that storing the entire past information is not necessary for optimality and establish the existence of an optimal quantum Markov policy. To achieve this, we introduce the dynamic programming operator for q-MDPs and employ it in the proof of the verification theorem.
\item[\ding{43}] In Section~\ref{sec2new-4}, we compare q-MDPs with QOMDPs. We achieve this by transforming any QOMDP into a classical MDP, denoted as c-QOMDP, with an uncountable compact state space, representing the set of density operators on a finite-dimensional Hilbert space. Utilizing established results on the approximation of classical MDPs with uncountable compact state spaces, we then approximate the c-QOMDP with finite-state classical MDPs, denoted as c-QOMDP$_n$ (see Theorem~\ref{compact:mainthm1}). As demonstrated in Section~\ref{classical-q}, any finite-state MDP can be integrated into our q-MDP model, allowing us to present c-QOMDP$_n$ as a specialized case of q-MDP with classical policies. Consequently, we can approximate any QOMDP using q-MDPs equipped with classical policies. This suggests that, at least in an approximate sense, q-MDPs exhibit a broader generality compared to QOMDPs.
\item[\ding{43}] In Section~\ref{sec2new-5}, we introduce a finite-action approximation, denoted as q-MDP$_n$, for q-MDPs, formulating it as a specific case of QOMDP. We then demonstrate its convergence to the original q-MDP  (see Theorem~\ref{compact:mainthm3}). Consequently, using this result, we can initially approximate any q-MDP with q-MDP$_n$ that is a specific QOMDP instance. Subsequently, by recasting q-MDP$_n$ as a classical MDP (given its nature as a QOMDP), we can further approximate it using finite-state MDPs following  Theorem~\ref{compact:mainthm1}.
\item[\ding{43}] In Section~\ref{sec1}, we define open-loop and classical-state-preserving closed-loop  quantum policies, respectively. We introduce the open-loop quantum policies by considering natural extensions of inverse and marginalization operators in the classical case to the quantum domain. The structure of quantum policies given by these specific descriptions of inverse and marginalization operators is established in Proposition~\ref{structure-qmdp}. It becomes apparent that these quantum policies do not incorporate state information but solely rely on the initial density operator. As a result, we categorize these policies as \emph{open-loop}. Then, our focus shifts to the definition of \emph{so-called} classical-state-preserving closed-loop  quantum policies by changing the descriptions of the inverse and marginalization operators. The need for such policies arises because open-loop policies, which do not use the state information, have limitations as they do not encompass the classical setting. Consequently, we present a definition for classical-state-preserving closed-loop  quantum policies, where we relax the constraint in the definition of inverse operator. Upon introducing the definition of the classical-state-preserving closed-loop  policies, we obtain the structure of these classical-state-preserving closed-loop  quantum policies in Proposition~\ref{structure-qwmdp}.
\end{itemize}

Below in Figure~\ref{Hier}, one can find the hierarchy of policies outlined in this paper, providing a structured overview of the different policy classes discussed and their relationships.

\begin{figure}[h]
\centering
\tiny
\begin{tikzpicture}[scale=0.75]

\draw[thick] (-8,-6) rectangle (8,6);

\draw[thick] (0,0) ellipse (5.5 and 5.5);
\node at (-5.5,5.5) {\textbf{Quantum Policies (QPs)}};
\node at (-0,5) {\textbf{Markov QPs}};

\draw[thick] (0,0) ellipse (4.5 and 4.5);
\node at (0,3) {\textbf{Closed-loop QPs}};
\node at (0,3.5) {\textbf{Clasical-State-Preserving}};

\draw[thick] (-1.5,-0.5) ellipse (2.6 and 2.6);
\node at (-2.5,-0.5) {\textbf{Classical QPs}};

\draw[thick] (1.5,-0.5) ellipse (2.6 and 2.6);
\node at (2.5,-0.5) {\textbf{Open-loop QPs}};

\end{tikzpicture}
\normalsize
\caption{Hierarchy of policies: (i) Quantum policies utilize all historical information. (ii) Markov quantum policies rely solely on current state information without any additional constraints. (iii) Classical-state-preserving closed-loop  quantum policies also use the current state information but with a relaxed invertibility condition on the quantum channel, mirroring the classical invertibility condition. (iv) Open-loop quantum policies similarly use current state information but impose a stricter invertibility condition on the quantum channel, again reflecting the classical case. (v) Classical quantum policies represent an embedding of classical policies within a quantum framework.}
\label{Hier}
\end{figure}
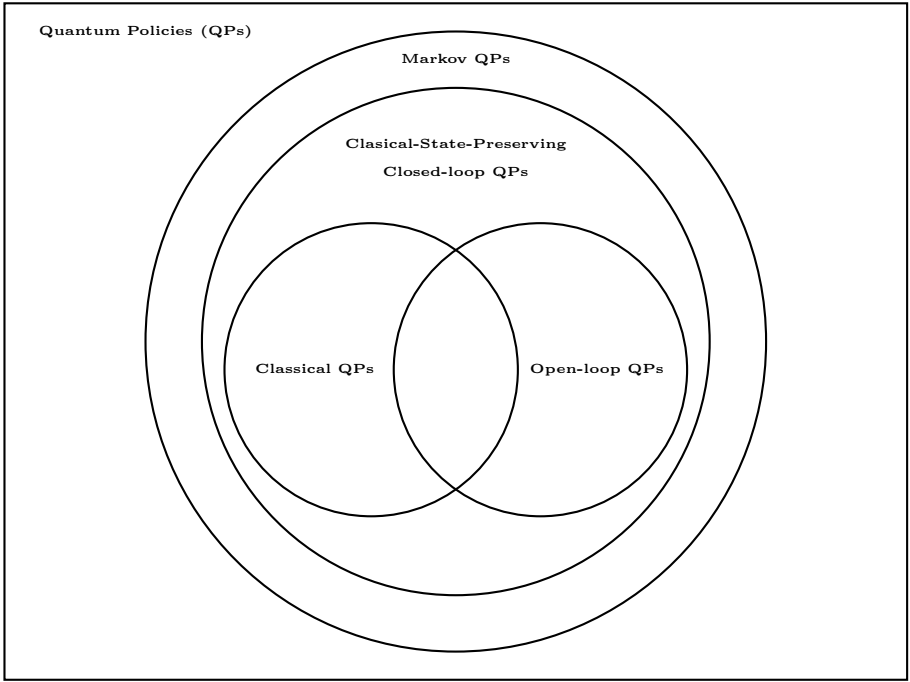

\section{Quantum Markov Decision Processes}\label{sec2new}

In this section, our goal is to introduce q-MDPs, but before diving into this topic, we first aim to establish the motivation behind this model. To do so, we start by laying out the formulation of classical MDPs and their deterministic reduction, a well-documented procedure that involves expanding the state and action spaces to the set of probability measures on the state and the state-action spaces. This reduction, which can be found in Section 9.2 of \cite{BeSh78}, serves as a crucial foundation for our formulation. Building on this deterministic MDP framework, we then proceed to introduce q-MDPs.

\subsection{Classical Markov Decision Processes}\label{sec2new-1}
A discrete-time classical MDP can be described by a five-tuple $\bigl( \sX, \sA, p, c \bigr)$, where Borel spaces $\sX$ and $\sA$ denote the \emph{state} and \emph{action} spaces, respectively. Here, we assume that $\sA$ is compact. The \emph{stochastic kernel} $p(\,\cdot\,|x,a)$ denotes the \emph{transition probability} of the next state given that previous state-action pair is $(x,a)$. The \emph{one-stage cost} function $c$ is a Borel measurable function from $\sX \times \sA$ to $\R$. A policy\footnote{We note that, in general, a policy at time $t$ can depend on the past history $h(t) \coloneqq (x_0,a_0,\ldots,x_t)$. However, due to the Markovian state dynamics, we can, without loss of generality, confine the search space to policies that solely rely on the current state information. This is why the policy is defined as above, utilizing only the current state information.} is a sequence $\bpi=\{\pi_{t}\}$ of stochastic kernels from $\sX$ to $\sA$. The set of all policies is denoted by $\Pi$. Let $\Phi$ denote the set of stochastic kernels $\varphi$ from $\sX$ to $\sA$. A \emph{stationary} policy is a constant sequence $\bpi=\{\pi_{t}\}$ where $\pi_{t}=\pi$ for all $t$ for some $\pi \in \Phi$.  The set of stationary policies is identified with the set $\Phi$. Under an initial distribution $\mu_0$ on $\sX$ and a policy $\bpi$, the evolutions of the states and actions are given by
\begin{align}
x_0 \sim \mu_0, \nonumber \,\,\,\,
x_t \sim p(\,\cdot\,|x_{t-1},a_{t-1}), \text{ } t\geq 1, \,\,
a_t \sim \pi_t(\,\cdot\,|x_t), \text{ } t\geq0. \nonumber
\end{align}
In MDPs, a typical goal is to minimize the cost structures such as finite horizon cost, discounted cost, and average (or ergodic) cost criteria, denoted as $J(\bpi,\mu_0)$. The optimal value of the MDP is then defined by $J^*(\mu_0) \coloneqq \inf_{\bpi \in \Pi} J(\bpi,\mu_0)$. A policy $\bpi^{*}$ is said to be optimal for $\mu_0$ if $J(\bpi^{*},\mu_0) = J^*(\mu_0)$. 

By using a process similar to the one described in Section 9.2 of \cite{BeSh78}, we can convert any MDP into an equivalent deterministic model.

To introduce the equivalent MDP, for each $\mu \in \P(\sX)$, we define 
$\sA(\mu) \coloneqq \left\{\nu \in \P(\sX \times \sA): v(\,\cdot\,\times\sA) = \mu(\,\cdot\,)\right\}.$ 
Then, the deterministic reduction of a MDP, denoted as d-MDP, can be described by a tuple 
$
\left( \P(\sX), \P(\sX \times \sA),  P, C \right),
$
where $\P(\sX)$ is the state space and $\P(\sX \times \sA)$ is the action space, each endowed with weak convergence topology. The state transition function $P: \P(\sX \times \sA) \rightarrow \P(\sX)$ is given by
\begin{align}
P(\nu)(\,\cdot\,) = \int_{\sX \times \sA} p(\,\cdot\,|x,a) \, \nu(dx,da). \label{eq1}
\end{align}
The \emph{one-stage cost} function $C$ is a linear function from $\P(\sX \times \sA)$ to $\R$ and is given by
\begin{align}
C(\nu) = \int_{\sX \times \sA} c(x,a) \, \nu(dx,da) \eqqcolon \langle c,\nu \rangle. \label{eq2}
\end{align}
In this model, a Markov policy is a sequence of classical channels $\bgamma=\{\gamma_{t}\}$ from $\sX$ to $\sX\times\sA$ where for all $t$
\begin{align} 
\gamma_t(\mu) \in \sA(\mu), \,\, \forall \,\mu \in \P(\sX). \label{inv_classical}
\end{align}
The set of all policies is denoted by $\Gamma$. A \emph{stationary} policy is a constant sequence of channels $\bgamma=\{\gamma\}$ from $\sX$ to $\sX\times\sA$, where $\gamma(\mu) \in \sA(\mu)$ for all $\mu \in \P(\sX)$.

\subsection{Quantum Systems}

Before delving into the formulation of q-MDPs, it is helpful to review some foundational concepts in quantum mechanics and establish a connection with classical systems. For computational purposes, this paper primarily focuses on finite-dimensional quantum systems. Accordingly, in this section, we provide all relevant definitions within the context of these quantum models. We refer the reader to the books \cite{NiCh02,Wil13,Wat18,Par92} for the terminology and definitions used in this section.

For any quantum system, there is a finite-dimensional complex Hilbert space $\H$, which is known as the state space of the system. Following the convention in quantum mechanics, we use the Dirac notation to denote the elements of $\H$ as a ket-vector $|\psi \rangle$. Then, the conjugate transpose of any ket-vector $|\psi \rangle$ is denoted by a bra-vector $\langle \psi |$. With this convention, the inner product in $\H$ can be written as $\langle \psi | \xi \rangle$. 

According to the first postulate of quantum mechanics, a state that completely describes a quantum system is a density operator $\rho$, which is a positive semi-definite operator with unit trace, acting on the state space $\H$, i.e.,
$
\langle \psi | \rho |  \psi \rangle \geq 0 \,\, \forall \psi \in \H \,\, \text{and} \,\, \tr(\rho) \coloneqq  \sum_{i=1}^n \rho_{ii} = 1,
$
where $n$ is the dimension of $\H$ and $\{\rho_{ii}\}$ are the diagonal elements of $\rho$ in any orthonormal basis of $\H$. If $\rho$ is a rank-one operator; i.e., $\rho = |\psi \rangle \langle \psi|$, where $|\psi \rangle$ is a unit vector, then $\rho$ is called a pure state. Otherwise, it is called a mixed state. Let $\D(\H)$ denote the set of all density operators on $\H$. It is known that $\D(\H)$ is a convex set and its extreme points are the set of pure states \cite[page 46]{Wat18}. 

A canonical example for a density operator is a probability distribution $\mu$ on the set $[n] \coloneqq \{1,2,\ldots,n\}$, where $n$ is the dimension of $\H$. Specifically, if we represent any density operator as a matrix with respect to a fixed orthonormal basis, then defining $\rho \coloneqq \diag\{\mu(i)\}$ results in a density operator, as $\langle \psi | \rho | \psi \rangle \geq 0 \,\, \forall \psi \in \H$ because $\mu(i) \geq 0$ for all $i$, and $\tr(\rho) = 1$ since $\sum_{i=1}^n \mu(i) = 1$. Thus, any probability distribution on $[n]$ can be viewed as a density operator on $\H$. These density operators are known as classical states. Observe that being classical apriori requires that a basis is fixed. Quantum states are thus strictly larger than classical states.

Following the analogy above, let us explore how to perform the expectation operation in the quantum domain. Consider \( X \), a random variable defined on the probability space \(([n], \mu)\). In the quantum framework, we can represent this random variable as \(\phi \coloneqq \text{diag}\{X(i)\}\). In view of this, the expectation of \( X \) with respect to the probability distribution \(\mu\) can then be calculated in the quantum domain as:
$
\cE[X] = \langle X, \mu \rangle \coloneqq \sum_{i} X(i) \mu(i) = \tr(\rho \phi) \coloneqq \langle \rho, \phi \rangle,
$
where the first inner product is the Euclidean inner product, and the second is the Hilbert-Schmidt inner product. Thus, in the quantum domain, the expectation operation is carried out using the trace operation.

According to the second postulate of the quantum mechanics, the evolution of an isolated quantum systems, named as closed quantum system, is described by a unitary transformation. That is, the next state $\rho^{+}$ of a closed quantum system is related to the current state $\rho$ via a unitary operator $U$ (i.e., $U U^* = U^* U = \Id$): 
$
\rho^+ = U \rho U^*. 
$
For instance, the bit-flip operator is a unitary evolution of a two-dimensional quantum system:
$$
\begin{pmatrix}
0 & 1 \\ 1 & 0
\end{pmatrix}
$$
It sends the q-bit $|0\rangle$ representing $[0 \,\, 1]$ to the q-bit $|1\rangle$ representing $[1 \,\, 0]$ and the q-bit $|1\rangle$ to the q-bit $|0\rangle$. 

In general, it is challenging to prevent the interaction of quantum system with its environment in order to achieve a closed quantum system. When such interactions occur, then this class of quantum systems are referred to as open quantum systems. Their evolution can be described as follows. First of all, if we have two quantum systems with the corresponding Hilbert spaces $\H_1$ and $\H_2$, the combined system is described by the tensor product Hilbert space $\H_1 \otimes \H_2$. With this in mind, let $\H_e$ represent the Hilbert space associated with the environment system, and let $\rho_e$ be the current state of the environment. The combined state of the system and environment is then given by the product $\rho \otimes \rho_e$. Since we have included all possible interactions within the environment, we can treat the combined system $\H \otimes \H_e$ as a closed quantum system. Therefore, the evolution of this combined state can be described by a unitary operator $U_c$ as follows:
$
\xi^+ = U_c (\rho \otimes \rho_e) U_c^*, 
$
where $\xi^+ \in \D(\H \otimes \H_e)$. Note that $\xi^+$ may not remain in a simple product form. To recover the next state of our quantum system $\H$ from this combined state $\xi^+$, we use the partial trace operation.

The partial trace operation, denoted by $\tr_{E}: \D(\H \otimes \H_e) \rightarrow \D(\H)$, is defined as $\tr_E \coloneqq \Id \otimes \tr$. Specifically, when applied to product states like $\rho \otimes \rho_e$, it acts as:
\[
\tr_E(\rho \otimes \rho_e) = \rho \, \tr(\rho_e) = \rho \quad \text{since} \quad \tr(\rho_e) = 1.
\]
The partial trace operation can be seen as a quantum analogue of marginalization. For instance, if $\nu$ is a probability distribution on $[n] \times [m]$, where $m$ is the dimension of $\H_e$, the marginalization on $[m]$ is given by:
\[
\Mar_{E}(\nu) = \nu(\cdot \times [m]) \eqqcolon \mu(\cdot),
\]
where $\mu$ is the marginal distribution of $\nu$ on $[n]$. If $\xi$ and $\rho$ are density operators corresponding to the distributions $\nu$ and $\mu$, respectively, then one can show that
\[
\rho = \tr_E(\xi).
\]
This means that the marginalization operation $\Mar_{E}$ in the classical domain is implemented in the quantum domain via the partial trace operation $\tr_E$.

Returning to our original question of how to recover the next state of our quantum system from the next state of the combined system, we now see that this is done through the partial trace operation:
\[
\rho^+ = \tr_E(U_c (\rho \otimes \rho_e) U_c^*) \eqqcolon \N(\rho).
\]
In other words, we trace out the environment's state from the combined state to obtain the next quantum state of our system. This operation can be viewed as a linear map on density operators, denoted by $\rho \mapsto \N(\rho)$. In the literature, since these operators act on other operators, they are referred to as super-operators.

These super-operators are not only linear but also satisfy the following properties:
\begin{itemize}
\item[\ding{43}] \textbf{Trace preservation}: For any linear operator $\rho$ on $\H$ (not necessarily a density operator, so it may not have unit trace), the super-operator $\N$ preserves the trace, meaning $\tr(\N(\rho)) = \tr(\rho)$. This is because such super-operators must map density operators to density operators.
\item[\ding{43}] \textbf{Complete positivity}: For any quantum environment system $\H_e$, the super-operator $\N \otimes \Id$ acting on $\D(\H \otimes \H_e)$ is positive, meaning it maps positive semi-definite operators to positive semi-definite operators. This property is crucial because the positivity of $\N$ alone is not enough; the combined operator $\N \otimes \Id$ must also be positive for any environment quantum system $\H_e$.
\end{itemize}
It is worth noting that the converse is also true: if $\N$ is trace-preserving and completely positive, then there exists an environment quantum system $\H_e$ such that $\N$ can be realized in this way (see \cite[Section 8.2]{NiCh02}).

In quantum information theory, super-operators that are both trace-preserving and completely positive are known as quantum channels. Quantum channels can also connect quantum systems of different dimensions, serving as quantum analogs of classical channels, as we will discuss shortly. Before diving into that, let us explore how to represent quantum channels explicitly. There are several ways to represent quantum channels, such as the Choi representation, Kraus representation, and Stinespring representation. In this paper, we focus on the Kraus representation (also known as operator-sum representation).

Let \( \N: \D(\H_1) \rightarrow \D(\H_2) \) be a super-operator. This super-operator is a quantum channel, meaning it is both trace-preserving and completely positive, if and only if there exist operators \( \{K_l\} \) from \( \H_1 \) to \( \H_2 \) such that (see \cite[Chapter 4]{Wil13})
$$
\N(\rho) = \sum_l K_l \rho K_l^* \,\, \text{and} \,\, \sum_l K_l^* K_l = \Id.
$$

As mentioned earlier, quantum channels are simply quantum generalizations of classical channels. Therefore, a fundamental example of a quantum channel is, in fact, a classical channel. Consider a classical channel \( W: [n] \rightarrow \P([m]) \) that maps each input \( i \in [n] \) to a probability distribution \( W(\cdot|i) \) on the set \([m]\). We can also represent this map as \( W: \P([n]) \rightarrow \P([m]) \), where
\[
W(\mu)(\cdot) \coloneqq \sum_i W(\cdot|i) \, \mu(i).
\]
Here, for simplicity, we use the same notation \( W \) for both mappings.

If we consider probability distributions as classical states, then this \( W \) mapping corresponds to a super-operator. In fact, it can be viewed as a quantum channel with the form:
\[
\N_W(\rho) = \sum_{\substack{j \in [m] \\ i \in [n]}} \sqrt{W(j|i)} \, | j \rangle \, \langle i| \, \rho \, |i\rangle \, \langle j | \, \sqrt{W(j|i)}.
\]
Here, \( \{|i\rangle\} \) is an orthonormal basis for \( \H_{[n]} \) and \( \{|j\rangle\} \) is an orthonormal basis for \( \H_{[m]} \), where the dimensions of \( \H_{[n]} \) and \( \H_{[m]} \) are \( n \) and \( m \), respectively. The Kraus operators associated with \( \N_W \) are \( \{\sqrt{W(j|i)} \, | j \rangle \, \langle i|\}_{i,j} \), and they satisfy the condition:
\[
\sum_{i,j}  (\sqrt{W(j|i)} \, | j \rangle \, \langle i|)^* \, \sqrt{W(j|i)} \, | j \rangle \, \langle i| = \sum_{i,j} |i\rangle \, \langle j | \, \sqrt{W(j|i)} \, \sqrt{W(j|i)} \, | j \rangle \, \langle i| = \Id.
\]
It can be shown that the quantum version of the probability distribution \( W(\mu) \) is exactly the same as \( \N_W(\rho) \), where \( \rho \) is the quantum analog of \( \mu \). Therefore, any classical channel can also be realized in the quantum domain, illustrating that quantum channels are indeed generalizations of classical channels. 

Before we dive into the formulation of q-MDPs, let us clarify one final point. Consider a bijective mapping \( f: [n] \rightarrow [n] \), which is essentially a permutation of the finite set \([n]\). This mapping defines a classical channel \( W(\cdot|i) \coloneqq \delta_{f(i)}(\cdot) \). The quantum equivalent of this classical channel can be expressed as 
\[
\N_W(\rho) = U \rho U^*,
\]
where \( U \) is the permutation matrix corresponding to the function \( f \) in matrix form.  Obviously, $U$ is unitary transformation. Hence, if the classical channel $W$ is realized by a bijective function, then in the quantum domain, this corresponds to the closed quantum system. In this sense, open quantum systems are generalizations of stochastic evolutions in classical setups, while closed quantum systems correspond to (bijective) deterministic evolutions in the classical setup.

With this understanding, we are now ready to introduce q-MDPs.

\subsection{Quantum Markov Decision Processes: A General Formulation}\label{sec2new-2}

Inspired by the connections between classical systems and quantum systems established in the previous section, and utilizing the deterministic reduction of MDPs previously discussed, we are now prepared to formulate the quantum Markov Decision Processes (q-MDPs). The development of this quantum model is guided by the following key observations:
\begin{itemize}
\item[\ding{43}] In quantum information and computation, traditional probability distributions are extended and substituted with density operators, as discussed earlier.
\item[\ding{43}] Classical channels, which are used to describe the probabilistic evolution of classical systems, are generalized and replaced with quantum channels. These quantum channels account for the more complex interactions and state changes that can occur in a quantum system.
\item[\ding{43}] The expectation operator, which is used to calculate expected values in classical systems, is generalized and replaced by the trace operator.
\end{itemize}
With these foundational concepts and analogies in place, we can now proceed to formally define the q-MDPs. This new framework will allow us to analyze and design quantum decision-making processes, extending the capabilities of classical MDPs into the quantum realm.

Initially, we assume that all Hilbert spaces in the description of q-MDPs are separable and possibly infinite dimensional. Later we confine ourselves to the finite dimensional setting. To this end, let $\H_{\sX}$ and $\H_{\sA}$ be separable complex Hilbert spaces that may have infinite dimensions. 

A discrete-time quantum Markov decision process (q-MDP) can be described by a tuple
\begin{align}
\bigl( \D(\H_{\sX}), \D(\H_{\sX} \otimes \H_{\sA}), \N, C \bigr), \nonumber
\end{align}
where  $\D(\H_{\sX})$ is the state space, $\D(\H_{\sX} \otimes \H_{\sA})$ is the action space, and the state transition super-operator 
$$\N: \D(\H_{\sX} \otimes \H_{\sA}) \rightarrow \D(\H_{\sX})$$ 
is a quantum channel. Hence, given the current action in $\D(H_{\sX} \otimes \H_{\sA})$, the next state is the output of the quantum channel $\N$; that is, the state dynamics is of the Markov type, in the sense that, the past is irrelevant to compute the next quantum state. 

The one-stage cost function $C$ is a linear function from $\D(\H_{\sX} \otimes \H_{\sA})$ to $\R$ of the following form
$$
C(\sigma) = \sTr(c \sigma) \eqqcolon \langle c,\sigma \rangle
$$
for some Hermitian operator $c$ from $\H_{\sX} \otimes \H_{\sA}$ to itself. Here, $\sTr(c \sigma) \eqqcolon \langle c,\sigma \rangle$ is Hilbert-Schmidt inner product. 

An admissible quantum policy constitutes a sequence of quantum channels $\bgamma=\{\gamma_{t}\}$ such that 
$$
\gamma_t: \left(\bigotimes_{k=0}^t \D\left(\H_{\sX}\right) \right) \otimes \left(\bigotimes_{k=0}^{t-1}\D\left(\H_{\sX} \otimes \H_{\sA}\right)\right) \rightarrow \D(\H_{\sX} \otimes \H_{\sA}),
$$ that is, $\gamma_t$ maps the history $\rho_0 \otimes \sigma_0 \otimes \ldots \otimes \rho_{t-1} \otimes \sigma_{t-1} \otimes \rho_t$ to an action $\sigma_t$ at time $t$. The set of all admissible policies is denoted by $\Gamma$. An admissible quantum policy is said to be Markov if it is a sequence of quantum channels $\bgamma=\{\gamma_{t}\}$ from $\H_{\sX}$ to $\H_{\sX} \otimes \H_{\sA}$; that is, $\gamma_t$ only uses the current state information $\rho_t$ at each time $t$. The set of all Markov policies is denoted by $\Gamma_m$. A Markov policy $\gamma$ is called stationary if $\gamma_t = \gamma_{t'}$ for all $t,t'$. 

Here, the evolutions of the states and the actions are given by
\begin{align}
\rho_0 &= \rho_0, \,\,\,\,
\rho_{t+1} = \N(\sigma_t), \text{ } t \geq 0 \nonumber \\
\sigma_t &= \gamma_t(\rho_0 \otimes \sigma_0 \otimes \ldots \otimes \rho_{t-1} \otimes \sigma_{t-1} \otimes \rho_t), \text{ } t \geq 0. \nonumber
\end{align}
This model can accommodate different cost structures as in classical MDPs, including finite horizon cost, discounted cost, and average cost criteria, which are defined as follows:
\begin{align*}
V(\bgamma,\rho_0) &= \sum_{t=0}^{T-1} C(\sigma_t) + C_T(\sigma_T) = \sum_{t=0}^{T-1} \langle c,\sigma_t \rangle + \langle c_T,\sigma_T \rangle \,\, \text{(finite horizon cost)}\\
V(\bgamma,\rho_0) &= \sum_{t=0}^{\infty}\beta^{t} C(\sigma_t) =  \sum_{t=0}^{\infty}\beta^{t} \langle c,\sigma_t \rangle \,\, \text{(discounted cost)}\\
V(\bgamma,\rho_0) &= \limsup_{T \rightarrow \infty} \frac{1}{T}\sum_{t=0}^{T-1} C(\sigma_t) =  \limsup_{T \rightarrow \infty}\frac{1}{T}\sum_{t=0}^{T-1} \langle c,\sigma_t \rangle  \,\, \text{(average cost)}
\end{align*}
While it is possible to extend the analysis developed for q-MDPs to cover finite horizon cost and average cost criteria, this work concentrates primarily on discounted cost. The examination of finite horizon cost and average cost criteria is considered as a future research direction.

The discounted optimal value of the q-MDP is defined as
\begin{align}
V^*(\rho_0) &\coloneqq \inf_{\bgamma \in \Gamma} V(\bgamma,\rho_0). \nonumber
\end{align}
A policy $\bgamma^{*}$ is said to be optimal for $\rho_0$ if $V(\bgamma^{*},\rho_0) = V^*(\rho_0)$.

\begin{example}

Many algorithms in quantum computing need the initial state to be set to a particular pure state \cite{Gri09}. In this example, we formulate this problem, named as state preparation problem, as a q-MDP. To this end, let $|\psi_{\Target}\rangle \langle \psi_{\Target}|$ be the target pure state in our state space $\H_{\sX}$ that we want to reach at some finite horizon. Let $$\N: \D(\H_{\sX} \otimes \H_{\sA}) \rightarrow \D(\H_{\sX})$$ be our state transition super-operator. Depending on the quantum physical system in question, this super-operator can have various forms. For simplicity, let us assume that \(\mathcal{N}\) is a quantum-to-classical channel, meaning it converts any quantum state into a classical state through a measurement process. 

To describe this operation mathematically, let $\{|x\rangle\}$ be an orthonormal basis for $\H_{\sX}$ and let $\{\Lambda_x\}$ be a collection of positive semi-definite operators on $\H_{\sX} \otimes \H_{\sA}$ satisfying $\sum_x \Lambda_x = \Id$\footnote{According to the third postulate of quantum mechanics, the measurement of any quantum system \(\mathcal{H}\) is described using a positive operator-valued measure (POVM) \(\{\Lambda_i\}\). A POVM is a collection of positive semi-definite operators that sum up to the identity operator, \(\mathbb{I}\), such that: $\sum_i \Lambda_i = \mathbb{I}.$ When a quantum system is in the state \(\rho\), the probability of observing a particular outcome \(i\) during a measurement is determined by the Born rule. This rule states that the probability is given by the trace of the product of the state \(\rho\) and the corresponding operator \(\Lambda_i\), mathematically expressed as $\tr(\rho \Lambda_i).$ This framework allows for the representation of measurements as a process that not only extracts information from the system but also potentially alters its state.}. Then, $\N$ can be written as \cite[Section 4.6.6]{Wil13}
$$
\N(\sigma) \coloneqq \sum_x \tr(\Lambda_x \sigma) \, |x\rangle \langle x|. 
$$
In this scenario, a measurement \(\{\Lambda_x\}\) is performed on \(\sigma\), with the outcome probabilities given by \(\{\tr(\Lambda_x \sigma)\}\). If the measurement yields the outcome \(x\), the state transitions to \(|x\rangle \langle x|\). Since the outcome of the measurement is not declared, the new state is a weighted combination of the pure states \(\{|x\rangle \langle x|\}\), where the weights correspond to the probabilities of obtaining each outcome.

To quantify how close the state \(\rho\) to the target pure state \(|\psi_{\Target}\rangle \langle \psi_{\Target}|\), we can use fidelity, as described in \cite[Section 9.2]{Wil13}:
\[
\mathcal{F}(\rho,|\psi_{\Target}\rangle \langle \psi_{\Target}|) = \tr(\rho |\psi_{\Target}\rangle \langle \psi_{\Target}|) = \langle \psi_{\Target}| \rho |\psi_{\Target}\rangle.
\]
Fidelity \(\mathcal{F}\) ranges from \(0\) to \(1\), serving as a measure of the overlap between two states: a fidelity of \(0\) indicates orthogonal states, while a fidelity of \(1\) signifies identical states via Cauchy-Schwarz inequality. Using fidelity, we define our one-stage cost as follows:
\[
C(\sigma) = 1 - \tr \left( |\psi_{\Target}\rangle \langle \psi_{\Target}| \otimes \mathbb{I}_{\sA} \, \sigma \right) = 1 - \langle |\psi_{\Target}\rangle \langle \psi_{\Target}| \otimes \mathbb{I}_{\sA}, \sigma \rangle \eqqcolon \langle c, \sigma \rangle,
\]
where 
$$c \coloneqq \Id_{\sX \times \sA} - |\psi_{\Target}\rangle \langle \psi_{\Target}| \otimes \mathbb{I}_{\sA}.$$
In the negative part of the cost function \(\tr \left( |\psi_{\Target}\rangle \langle \psi_{\Target}| \otimes \mathbb{I} \, \sigma \right)\), we focus on the overlap between the target state \(|\psi_{\Target}\rangle \langle \psi_{\Target}|\) and the \(\mathcal{H}_{\sX}\)-marginal \(\tr_{\sA}(\sigma)\) of \(\sigma\), which can also be expressed as:
\small
\[
\tr \left( |\psi_{\Target}\rangle \langle \psi_{\Target}| \otimes \mathbb{I}_{\sA} \, \sigma \right) = \tr \left( |\psi_{\Target}\rangle \langle \psi_{\Target}| \, \tr_{\sA}(\sigma) \right) = \F(|\psi_{\Target}\rangle \langle \psi_{\Target}|,\tr_{\sA}(\sigma)).
\]
\normalsize
Thus, the cost function $C$ is an appropriate choice for a one-stage cost function as it effectively penalizes the lack of overlap between the state of our system and the target state.

In this model, our goal is to reach the target state  \(|\psi_{\Target}\rangle \langle \psi_{\Target}|\) via manipulating our state using the policies in an optimal way. To achieve this optimization, we can employ either the finite horizon cost or the discounted cost.
\end{example}

For computational purposes, throughout the remainder of this work, we confine our exploration solely to q-MDPs with finite dimensional Hilbert spaces $\H_{\sX}$ and $\H_{\sA}$. Additionally, we limit our focus to classical MDPs with a finite number of states and actions (i.e., $|\sX| < \infty$ and $|\sA| < \infty$). Consequently, we typically let $\dim(\H_{\sX}) = |\sX|$ and $\dim(\H_{\sA}) = |\sA|$. 

In the following section, we explain how d-MDPs can be represented as q-MDPs and describe the quantum versions of classical policies. 

\subsection{Classical Policies and Quantum Versions of d-MDPs}\label{classical-q}

In this section, we show how to reduce the d-MDPs into q-MDPs. To this end, consider a d-MDP with the components $\bigl( \P(\sX), \P(\sX\times\sA), P, C \bigr)$. Recall that in d-MDPs, a (Markov) policy is a sequence of classical channels $\bgamma=\{\gamma_{t}\}$ from $\sX$ to $\sX\times\sA$ such that for all $t$ 
$$\gamma_t(\mu) \in \sA(\mu), \,\, \forall \,\mu \in \P(\sX).$$
Note that given any classical channel $\gamma_t:\P(\sX) \rightarrow \P(\sX\times\sA)$, there is a corresponding stochastic kernel $W_t$ from $\sX$ to $\sX\times\sA$ where $\gamma_t(\mu)(\, \cdot\,,\,\cdot\,) = \sum_{x \in \sX}W_t(\,\cdot\,,\,\cdot\,|x) \, \mu(x)$. 
If $\gamma_t(\mu) \in \sA(\mu)$ for all $\mu \in \P(\sX)$, then for any $x \in \sX$, we have 
$$
\gamma_t(\delta_x)(\,\cdot\,\times\sA) = \sum_{a \in \sA} W_t(\,\cdot\,,a|x)= \delta_x(\,\cdot\,).
$$
This implies that, for any $x \in \sX$, one can disintegrate $W_t(\,\cdot\,,\,\cdot\,|x)$ as follows
$$
W_t(y,a|x) = \delta_x(y) \, \pi_t(a|x)
$$
for some $\pi_t \in \Phi$. That is, for any  $\gamma_t:\P(\sX) \rightarrow \P(\sX\times\sA)$ satisfying $\gamma_t(\mu) \in \sA(\mu)$ for all $\mu \in \P(\sX)$, there exists a stochastic kernel $\pi_t$ from $\sX$ to $\sA$ such that $\gamma_t(\mu) = \mu \otimes \pi_t$, where $\mu \otimes \pi_t(x,a) \coloneqq \mu(x) \, \pi_t(a|x)$ (this is not a tensor product). Therefore, there is a bijective relation between the set of policies $\Pi$ of the original MDP and the set of policies $\Gamma$ of the d-MDP. This bijective relation indeed establishes the equivalence of d-MDP and the original MDP. 

In the quantum version of the d-MDP, we let $\H_{\sX} \coloneqq \rC^{\sX}$ and $\H_{\sA} \coloneqq \rC^{\sA}$. The quantum representation of the d-MDP has the following components:
\begin{align}
\bigl( \D(\H_{\sX}), \D(\H_{\sX} \otimes \H_{\sA}), \N, C \bigr), \nonumber
\end{align}
where the super-operator $\N: \D(\H_{\sX} \otimes \H_{\sA}) \rightarrow \D(\H_{\sX})$ is defined as 
$$
\N(\sigma) = \sum_{\substack{(x,a) \in \sX\times\sA \\ y \in \sX}} \sqrt{p(y|x,a)} \, | y \rangle \, \langle x,a| \, \sigma \, |x,a\rangle \, \langle y | \, \sqrt{p(y|x,a)}.
$$
Here, \(\{|x\rangle\}\) represents a fixed orthonormal basis for \(\H_{\sX}\), while \(\{|x,a\rangle\}\) is an orthonormal basis for \(\H_{\sX} \otimes \H_{\sA}\), constructed from the orthonormal basis \(\{|x\rangle\}\) of \(\H_{\sX}\) and a fixed orthonormal basis \(\{|a\rangle\}\) for \(\H_{\sA}\). The one-stage cost function $C$ is a linear function from $\D(\H_{\sX} \otimes \H_{\sA})$ to $\R$ of the following form
$$
C(\sigma) = \sTr(c \sigma) \eqqcolon \langle c,\sigma \rangle
$$
with a diagonal operator $c = \diag\{c(x,a)\}_{(x,a) \in \sX\times\sA}$ from $\H_{\sX} \otimes \H_{\sA}$ to itself. In the quantum version of d-MDP, an admissible classical policy is a sequence quantum channels $\bgamma=\{\gamma_{t}\}$ from $\H_{\sX}$ to $\H_{\sX} \otimes \H_{\sA}$, where, for all $t$, there exists a classical channel $\gamma_t: \P(\sX) \rightarrow \P(\sX\times\sA)$ realized by the stochastic kernel $\pi_t$ from $\sX$ to $\sA$ such that 
$$
\gamma_t(\rho) = \sum_{(x,a) \in \sX \times \sA} \sqrt{\pi_t(a|x)} \, |x,a\rangle \, \langle x| \, \rho |x\rangle \, \langle x,a| \, \sqrt{\pi_t(a|x)}. 
$$
Here, using a slight abuse of notation, we represent the quantum counterpart of $\gamma_t$ with the same symbol. Hence, we can embed a d-MDP into a q-MDP with specific state dynamics and specific a class of policies. Throughout the rest of this paper, we refer to the quantum generalizations of classical policies as \emph{classical policies for q-MDPs}, denoting this set of policies as $\Gamma_c$.

Note that any probability measure $\mu$ on a finite set $\sE$ can be written as a density operator on the Hilbert space $\H_{\sE} \coloneqq \rC^{\sE}$ in the following form $\xi_{\mu} \coloneqq \diag\{\mu(e)\}_{e \in \sE}$. Let us denote this injective relation as $T_{c \rightarrow q}: \P(\sE) \rightarrow \D(\H_{\sE})$.  
In view of this quantum representations of classical probability measures on finite sets, one can easily establish that 
\begin{align*}
T_{c \rightarrow q} \circ P(\nu) &= \N \circ  T_{c \rightarrow q}(\nu) \,\,\,\, \forall \nu \in \P(\sX\times\sA) \\
T_{c \rightarrow q} \circ \gamma_t(\mu) &= \gamma_t \circ  T_{c \rightarrow q}(\mu) \,\,\,\, \forall \mu \in \P(\sX), 
\end{align*}
where in the last equation, the first $\gamma_t$ is a classical channel and the second one is a quantum channel. Moreover, we have
$$
C(\nu) = C \circ T_{c \rightarrow q}(\nu) \,\,\,\, \forall \nu \in \P(\sX\times\sA),
$$
where the first $C$ is a function on $\P(\sX\times\sA)$ defined in Section~\ref{sec2new-1} and the second one is a function on $\D(\H_{\sX}\otimes\H_{\sA})$ defined in Section~\ref{sec2new-2}. Hence, d-MDP and its quantum re-formulation are equivalent. 

\section{A Verification Theorem for q-MDPs}\label{sec2new-3}

As in classical MDP theory, we initially establish that storing the entire past information is not necessary for optimality. Namely, we show the existence of an optimal quantum Markov policy. To this end, the dynamic programming operator for q-MDP  
$\L_{g}:C_b(\D(\H_{\sX})) \rightarrow C_b(\D(\H_{\sX}))$
can be defined as follows:
\begin{align*}
\L_{g} V(\rho) &\coloneqq \min_{\gamma \in \N_{(\H_{\sX} \rightarrow \H_{\sX} \otimes \H_{\sA})}} \left[ \langle c,\gamma(\rho) \rangle + \beta \, V(\N \circ \gamma(\rho)) \right],
\end{align*}
where $\N_{(\H_{1} \rightarrow \H_{2})}$ and $C_b(\sE)$ denote the collection of quantum channels from $\H_{1}$ to $\H_{2}$ and the set of bounded and continuous functions on the metric space $\sE$, respectively. We note that for any finite dimensional complex Hilbert space $\H$, the set of density operators $\D(\H)$ is endowed with Hilbert-Schmidt norm. Consequently, the set of quantum channels $\N_{(\H_{1} \rightarrow \H_{2})}$ is endowed with the corresponding operator norm. 
It is straightforward to prove that $\L_{g}$ maps continuous function to continuous function on $\D(\H_{\sX})$ \cite[Proposition D.6]{HeLa96} as $\N_{(\H_{\sX} \rightarrow \H_{\sX} \otimes \H_{\sA})}$ is compact \cite[Proposition 2.28]{Wat18} and the function $\langle c,\gamma(\rho) \rangle + \beta \, V(\N \circ \gamma(\rho))$ is continuous in $\gamma$. Moreover, one can also show that the operator $\L_{g}$ is $\beta$-contraction with respect sup-norm on $C_b(\D(\H_{\sX}))$. Indeed, for any $V, \hat V \in C_b(\D(\H_{\sX}))$, we have 
\begin{align*}
&\sup_{\rho \in \D(\H_{\sX})} \left|\L_{g} V(\rho) - \L_{g} \hat V(\rho) \right|\\ &\leq \sup_{\rho \in \D(\H_{\sX}), \gamma \in \N_{(\H_{\sX} \rightarrow \H_{\sX} \otimes \H_{\sA})}} \left| \langle c,\gamma(\rho) \rangle + \beta \, V(\N \circ \gamma(\rho)) - \langle c,\gamma(\rho) \rangle - \beta \, \hat V(\N \circ \gamma(\rho)) \right| \\
&= \beta \,  \sup_{\rho \in \D(\H_{\sX}), \gamma \in \N_{(\H_{\sX} \rightarrow \H_{\sX} \otimes \H_{\sA})}} \left| V(\N \circ \gamma(\rho)) -  \hat V(\N \circ \gamma(\rho)) \right| \\
&\leq \beta \,  \sup_{\rho \in \D(\H_{\sX})} \left|V(\rho) - V(\rho) \right|
\end{align*}
as $\N \circ \gamma(\rho) \in \D(\H_{\sX})$ for any $\rho \in \D(\H_{\sX})$. Therefore, $\L_g$ has a unique fixed point $V_{\mathrm{fg}} \in C_b(\D(\H_{\sX}))$ by Banach fixed point theorem. The following verification theorem establishes the equivalence of $V^*$ and $V_{\mathrm{fg}}$ and gives a sufficient condition for the optimality of a Markov policy. 

\begin{theorem}\label{verification}
We have $V^* = V_{\mathrm{fg}}$. Moreover, 
for any initial density operator $\rho_0 \in \D(\H_{\sX})$, there exists an optimal Markov policy $\bgamma^*=\{\gamma_{t}^*\}$ satisfying the following for all $t\geq 0$:
\begin{align*}
\gamma_t^* &\in \argmin_{\gamma \in \N_{(\H_{\sX} \rightarrow \H_{\sX} \otimes \H_{\sA})}} \left[ \langle c,\gamma(\rho_t^*) \rangle + \beta \, V_{\mathrm{fg}}(\N\circ\gamma(\rho_t^*)) \right], \\
\intertext{where}
\rho_t^* &= \N(\sigma_{t-1}^*) \,\,\,\, t\geq 1, \,\,\,
\rho^*_0 = \rho_0.
\end{align*}
\end{theorem}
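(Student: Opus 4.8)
The plan is to prove the two assertions by the classical ``two\nobreakdash-inequality'' scheme of discounted dynamic programming, adapted to the deterministic control system that a Markov policy induces on $\D(\H_{\sX})$: once a channel $\gamma_t$ is applied we have $\sigma_t = \gamma_t(\rho_t)$, $\rho_{t+1} = \N(\sigma_t)$, and stage cost $\langle c,\sigma_t\rangle$, so the whole trajectory is deterministic given $\rho_0$ and $\bgamma$. First I would record one structural fact that makes even the history-dependent case tractable: every $\sigma \in \D(\H_{\sX}\otimes\H_{\sA})$ is attainable as $\gamma(\rho)$ for a \emph{memoryless} channel, namely the replacement channel $\mathcal{R}_\sigma(\,\cdot\,) \coloneqq \tr(\,\cdot\,)\,\sigma$, which is completely positive and trace preserving and satisfies $\mathcal{R}_\sigma(\rho)=\sigma$ for every density operator $\rho$. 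Hence the minimization defining $\L_{g}$ effectively ranges over all admissible one-step actions $\sigma$, and for every $\rho$ and every $\sigma$ the fixed-point property of $V_{\mathrm{fg}}$ gives the one-step inequality $V_{\mathrm{fg}}(\rho) \le \langle c,\sigma\rangle + \beta\, V_{\mathrm{fg}}(\N(\sigma))$, with equality at a minimizing channel.

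For the lower bound $V^* \ge V_{\mathrm{fg}}$, I would fix an arbitrary admissible (history-dependent) policy $\bgamma \in \Gamma$ and the induced deterministic trajectory $\{(\rho_t,\sigma_t)\}$. Since each realized action $\sigma_t$ is a density operator in $\D(\H_{\sX}\otimes\H_{\sA})$ and $\rho_{t+1} = \N(\sigma_t)$, the one-step inequality yields $V_{\mathrm{fg}}(\rho_t) \le \langle c,\sigma_t\rangle + \beta\, V_{\mathrm{fg}}(\rho_{t+1})$ for every $t$. Telescoping gives $V_{\mathrm{fg}}(\rho_0) \le \sum_{t=0}^{T-1}\beta^{t}\langle c,\sigma_t\rangle + \beta^{T} V_{\mathrm{fg}}(\rho_T)$; because $c$ is a fixed Hermitian operator on a finite-dimensional space and $V_{\mathrm{fg}} \in C_b(\D(\H_{\sX}))$ is bounded while $\beta \in (0,1)$, the tail $\beta^{T} V_{\mathrm{fg}}(\rho_T) \to 0$, so letting $T \to \infty$ gives $V_{\mathrm{fg}}(\rho_0) \le V(\bgamma,\rho_0)$. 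Taking the infimum over $\bgamma \in \Gamma$ gives $V_{\mathrm{fg}}(\rho_0) \le V^*(\rho_0)$.

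For the upper bound and the construction of the optimal policy I would build a Markov policy directly along the optimal trajectory. At each step the objective $\gamma \mapsto \langle c,\gamma(\rho)\rangle + \beta\, V_{\mathrm{fg}}(\N\circ\gamma(\rho))$ is continuous on the compact set $\N_{(\H_{\sX} \rightarrow \H_{\sX}\otimes\H_{\sA})}$, so the argmin is nonempty by Weierstrass; set $\rho_0^* = \rho_0$, choose $\gamma_0^* \in \argmin$ at $\rho_0^*$, put $\sigma_0^* = \gamma_0^*(\rho_0^*)$ and $\rho_1^* = \N(\sigma_0^*)$, and iterate, producing exactly the policy in the statement. Since each $\gamma_t^*$ attains the minimum, the fixed-point identity holds with equality, $V_{\mathrm{fg}}(\rho_t^*) = \langle c,\sigma_t^*\rangle + \beta\, V_{\mathrm{fg}}(\rho_{t+1}^*)$; telescoping and sending $T \to \infty$ (the tail again vanishes) gives $V(\bgamma^*,\rho_0) = V_{\mathrm{fg}}(\rho_0)$. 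Hence $V^*(\rho_0) \le V(\bgamma^*,\rho_0) = V_{\mathrm{fg}}(\rho_0)$, which with the lower bound yields $V^* = V_{\mathrm{fg}}$ and the optimality of $\bgamma^*$.

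The main obstacle is the lower bound over the full, history-dependent class $\Gamma$: a priori $\gamma_t$ acts on the entire past $\rho_0 \otimes \sigma_0 \otimes \cdots \otimes \rho_t$ and need not factor through the current state $\rho_t$, so it is not evident that the memoryless operator $\L_{g}$ controls it. The replacement-channel observation resolves this, since whatever action memory produces is also achievable memorylessly, so the one-step inequality applies verbatim; the only remaining care is the interchange of $T \to \infty$ with the infinite-horizon sum, justified by the geometric tail estimate above. I expect compactness of the channel set (\cite[Proposition 2.28]{Wat18}) together with the continuity already established for $\L_{g}$ to supply attainment of the minimizers, so that no separate measurable-selection argument is needed for a fixed initial $\rho_0$.
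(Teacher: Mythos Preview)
Your proof is correct. The construction of the optimal Markov policy and the telescoping argument showing $V_{\mathrm{fg}}(\rho_0) = V(\bgamma^*,\rho_0)$ (hence $V^* \le V_{\mathrm{fg}}$) is essentially identical to the paper's.

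For the lower bound $V_{\mathrm{fg}} \le V^*$, the two arguments diverge. The paper first shows $V^* \ge \L_g V^*$ by decomposing an arbitrary policy's cost into first stage plus continuation and minimizing over the first-stage channel; it then invokes monotonicity of $\L_g$ to obtain $V^* \ge \L_g^n V^*$ for all $n$, and finally uses the Banach contraction convergence $\L_g^n V^* \to V_{\mathrm{fg}}$. Your route instead telescopes the one-step inequality $V_{\mathrm{fg}}(\rho_t) \le \langle c,\sigma_t\rangle + \beta\, V_{\mathrm{fg}}(\rho_{t+1})$ directly along the trajectory of an arbitrary policy. This is more elementary: it relies only on the fixed-point identity for $V_{\mathrm{fg}}$ and needs neither the monotonicity of $\L_g$ nor the convergence of its iterates. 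Your replacement-channel observation---that $\mathcal{R}_\sigma(\cdot) = \tr(\cdot)\,\sigma$ is completely positive, trace preserving, and sends every $\rho$ to $\sigma$---makes the one-step inequality valid for \emph{every} realized action $\sigma$, regardless of how it was produced from history, and so handles the full class $\Gamma$ transparently. (It incidentally shows that the minimization in $\L_g V(\rho)$ effectively ranges over all of $\D(\H_{\sX}\otimes\H_{\sA})$, so $V_{\mathrm{fg}}$ is in fact constant on $\D(\H_{\sX})$, a structural point the paper does not remark on.) The paper's route, by contrast, is the standard template from discounted MDP theory and does not require exhibiting any channel that attains a prescribed action.
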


\begin{proof}
Fix any admissible quantum policy $\bgamma = \{\gamma_t\}_{t\geq0} \in \Gamma$. Then, the cost function of $\bgamma$ satisfies the following 
\begin{align*}
V(\bgamma,\rho_0) &= \langle c,\gamma_0(\rho_0) \rangle + \beta \, V(\{\gamma_t\}_{t\geq 1},\N \circ \gamma_0(\rho_0)) \\
&\geq \min_{\gamma_0 \in \N_{(\H_{\sX} \rightarrow \H_{\sX} \otimes \H_{\sA})}} \left[ \langle c,\gamma_0(\rho_0) \rangle + \beta \, V(\{\gamma_t\}_{t\geq 1},\N \circ \gamma_0(\rho_0)) \right] \\
&\geq \min_{\gamma_0 \in \N_{(\H_{\sX} \rightarrow \H_{\sX} \otimes \H_{\sA})}} \left[ \langle c,\gamma_0(\rho_0) \rangle + \beta \, V^*(\N \circ \gamma_0(\rho_0)) \right] = \L_g V^*(\rho_0).
\end{align*}
Since above inequality is true for any quantum policy $\bgamma$, we have $V^* \geq \L_g V^*$. Using the latter inequality and the monotonicity of $\L_g$ (i.e., $V \geq \hat V$ implies $\L_g V \geq L_g \hat V$), one can also prove that $V^* \geq \L^n_gV^*$ for any positive integer $n$. By Banach fixed point theorem, we have $\L^n_g V^* \rightarrow V_{\mathrm{fg}}$ in sup-norm as $n\rightarrow \infty$. Hence, $V^* \geq  V_{\mathrm{fg}}$. 

To prove the converse, for any $\rho$, let 
$$
\gamma_{\rho} \in \argmin_{\gamma \in \N_{(\H_{\sX} \rightarrow \H_{\sX} \otimes \H_{\sA})}} \left[ \langle c,\gamma(\rho) \rangle + \beta \, V_{\mathrm{fg}}(\N\circ\gamma(\rho)) \right].
$$
The existence of $\gamma_{\rho}$ for any $\rho$ follows from the facts that $\N_{(\H_{\sX} \rightarrow \H_{\sX} \otimes \H_{\sA})}$ is compact \cite[Proposition 2.28]{Wat18} and the function $\langle c,\gamma(\rho) \rangle + \beta \, V(\N \circ \gamma(\rho))$ is continuous in $\rho$ and $\gamma$. With this construction, we then have
\begin{align*}
V_{\mathrm{fg}}(\rho_0) &=  \langle c,\gamma_{\rho_0}(\rho_0)\rangle + \beta \, V_{\mathrm{fg}}(\N \circ \gamma_{\rho_0}(\rho_0)) \\
&\eqqcolon \langle c,\sigma_0^*  \rangle + \beta \, V_{\mathrm{fg}}(\rho_1^*) \\
&= \langle c,\sigma_0^*  \rangle + \beta \, \left[\langle c,\gamma_{\rho_1^*}(\rho_1^*) \rangle + \beta \, V_{\mathrm{fg}}(\N \circ \gamma_{\rho_1^*}(\rho_1^*)) \right] \\
&\eqqcolon \langle c,\sigma_0^*  \rangle + \beta \, \left[\langle c,\sigma_1^* \rangle + \beta \, V_{\mathrm{fg}}(\rho_2^*) \right] \\
&\phantom{x}\vdots \\
&\eqqcolon \sum_{t=0}^{N-1} \beta^t \, \langle c,\sigma_t^* \rangle + \beta^N \, V_{\mathrm{fg}}(\rho_N^*) \rightarrow V(\{\gamma_t^*\},\rho_0) \,\, \text{as} \,\ N\rightarrow\infty,
\end{align*}
where $\gamma_t^*(\cdot) \coloneqq  \gamma_{\rho_t^*}(\cdot)$, $\rho_t^* = \N(\sigma_{t-1}^*)$ for each $t\geq 1$, and $\rho_0^* = \rho_0$. Hence, $V_{\mathrm{fg}} \geq V^*$. This implies that
$$
V_{\mathrm{fg}} = V(\{\gamma_t^*\},\rho_0) = V^*
$$
which completes the proof.
\end{proof}

As a result of the verification theorem, for the remainder of this paper, we focus solely on quantum Markov policies. Therefore, when referring to an admissible quantum policy, we inherently mean a Markov policy, despite any notational ambiguity.

\section{Comparison with Previous Quantum MDP Models}\label{sec2new-4}

In this section, we compare q-MDPs with QOMDPs introduced in \cite{BaBaAa14} and other related studies \cite{YiYi18,YiFeYi21}\footnote{As the prior research on quantum MDPs exhibits substantial parallels with QOMDPs introduced in \cite{BaBaAa14}, our primary emphasis centers on the comparison between q-MDP and the generalization of QOMDPs for an extensive assessment. This comparative analysis seeks to highlight the generality and flexibility of q-MDPs in contrast to previous studies.}. The generalization of a QOMDP, denoted again as QOMDP with an abuse of notation, can be specified by a tuple  
$$
\left(\D(\H),\sA,\sM, , \T^{d}, \T^{i}, C\right)
$$
where $\D(\H)$ is the state space, $\sA$ is the finite action space, $\sM$ is the finite observation space, $\T^d = \{\N_{a,d}: a \in \sA\}$ is the set of (divisible) quantum channels that give the measurement to be performed on the state given the action, $\T^{i} = \{\N_{a,i}: a \in \sA\}$ is the set of (indivisible) quantum channels that is performed on the state given the action after the measurement, and $C:\D(\H) \times \sA \rightarrow \R$ is the cost of the current state density operator for a given action.

In this model, given the current state information $\rho \in \D(\H)$, agent picks an action $a \in \sA$ using some policy $\gamma:\D(\H) \rightarrow \sA$ and applies first the corresponding measurement super-operator $\N_{a,d} \in \T^{d}$ to $\rho$ with possible outcomes in space $\sM$. Hence, it is assumed that each $\N_{a,d} \in \T^d$ has $|\sM|$ Kraus operators; that is,
$$
\N_{a,d}(\rho) = \sum_{m \in \sM} N_{a,d}^m \rho N_{a,d}^{m,\dag}. 
$$
If the observation $m$ is received after the application of $\N_{a,d}$, the state evolves to the intermediate state 
$$
\rho_{1/2+} = \frac{N_{a,d}^m \rho N_{a,d}^{m,\dag}}{\tr(N_{a,d}^m \rho N_{a,d}^{m,\dag})}.
$$
Then, the next state of the system can be obtained by applying the indivisible super-operator $\N_{a,i} \in \T^{i}$ to $\rho_{1/2+}$
$$
\rho_{+} = \N_{a,i}(\rho_{1/2+}) = \sum_{m \in \sM} N_{a,i}^m \rho_{1/2+} N_{a,i}^{m,\dag}, 
$$
where $\N_{a,i}(\cdot) = \sum_{m \in \sM} N_{a,i}^m (\cdot) N_{a,i}^{m,\dag}$ is the Kraus representation of the chosen quantum-channel $\N_{a,i}$. Finally, the following cost is paid 
$
C(\rho,a). 
$
Note that the observation $m$ is received with probability $\tr(N_{a,d}^m \rho N_{a,d}^{m,\dag})$. Consequently, the state dynamics evolve randomly in this model. Let $\cE[\cdot]$ represent the expectation operator that accounts for this randomness. Then, the objective is to minimize the discounted cost
$$
J(\gamma,\rho_0) = \sum_{t=0}^{\infty} \beta^t \, \cE[C(\rho_t,a_t)]
$$
by properly selecting the policy $\gamma: \D(\H) \rightarrow \sA$.


Note that the state dynamics of QOMDP can be written as follows
$$
\rho_{t+1} = f(\rho_t,a_t,m_t) 
$$
where 
$$f(\rho,a,m) \coloneqq \sum_{\hat m \in \sM} \frac{N_{a,i}^{\hat m} N_{a,d}^m \rho N_{a,d}^{m,\dag}N_{a,i}^{\hat m,\dag}}{\tr(N_{a,d}^m \rho N_{a,d}^{m,\dag})} $$
and the random variable $m_t$ takes values in $\sM$ and has the following distribution $\rP_{\mathsf{noise}}(m_t = m|\rho,a) = \tr(N_{a,d}^m \rho N_{a,d}^{m,\dag})$. Hence, if we view $m_t$ as a noise, QOMDP is equivalent to a classical Markov decision processes, denoted as c-QOMDP, with the following components 
$$
(\sX,\sA,p,c)
$$
where $\sX \coloneqq \D(\H)$, $c = C$, and
\begin{align*}
p(\cdot|\rho,a) &\coloneqq \sum_{m \in \sM} \delta_{f(\rho,a,m)}(\cdot) \, \rP_{\mathsf{noise}}(m|\rho,a).
\end{align*}
Therefore, classical methodologies like dynamic programming and linear programming are applicable, without any modifications to QOMDPs. Unfortunately, in q-MDPs, direct application of classical techniques is not feasible; they need to be adapted to suit our setup. This is true especially for linear programming formulation.

Moreover, c-QOMDP could be transformed into a deterministic MDP by lifting the state and action spaces to the set of probability measures as established previously. Then, this formulation could be further transformed into q-MDP as done in Section~\ref{classical-q}. Hence, we can view the QOMDP as a specialized instance of q-MDP. However, a complication arises: this transformation demands an infinite-dimensional Hilbert space for the state process, a topic beyond the scope of this paper since in Section~\ref{classical-q}, we only consider the quantum transformation of deterministic reduction of finite state and action MDPs into q-MDPs.

However, we can address this complication by approximating the c-QOMDP with finite state MDPs as established in \cite{SaYuLi17}. Let $d_{\sX}$ denote the metric on $\sX$ induced by Hilbert-Schmidt norm. Since $\sX$ is compact and thus totally bounded, one can find a sequence $\bigl(\{x_{n,i}\}_{i=1}^{k_n}\bigr)_{n\geq1}$ of finite grids in $\sX$ such that for all $n$,
\begin{align}
\min_{i\in\{1,\ldots,k_n\}} d_{\sX}(x,x_{n,i}) < 1/n \text{ for all } x \in \sX. \nonumber
\end{align}
The finite grid $\{x_{n,i}\}_{i=1}^{k_n}$ is called an $1/n$-net in $\sX$. Let $\sX_n \coloneqq \{x_{n,1},\ldots,x_{n,k_n}\}$ and define function $Q_n$ mapping $\sX$ to $\sX_n$ by
\begin{align}
Q_n(x) \coloneqq \argmin_{x_{n,i} \in \sX_n} d_{\sX}(x,x_{n,i}),\nonumber
\end{align}
where ties are broken so that $Q_n$ is measurable. For each $n$, $Q_n$ induces a partition $\{\S_{n,i}\}_{i=1}^{k_n}$ of the state space $\sX$ given by
\begin{align}
\S_{n,i} = \{x \in \sX: Q_n(x)=x_{n,i}\}, \nonumber
\end{align}
with diameter $\diam(\S_{n,i}) \coloneqq \sup_{x,y\in\S_{n,i}} d_{\sX}(x,y) < 2/n$. Let $\{\nu_n\}$ be a sequence of
probability measures on $\sX$ satisfying
\begin{align}
\nu_n(\S_{n,i}) > 0 \text{  for all  } i,n.  \label{compact:numeas}
\end{align}
We let $\nu_{n,i}$ be the restriction of $\nu_n$ to $\S_{n,i}$ defined by
\begin{align}
\nu_{n,i}(\,\cdot\,) \coloneqq \frac{\nu_n(\,\cdot\,)}{\nu_n(\S_{n,i})}. \nonumber
\end{align}
The measures $\nu_{n,i}$ will be used to define a sequence of finite-state MDPs, denoted as c-QOMDP$_{n}$ ($n\geq1$), to approximate the c-QOMDP. To this end, for each $n$ define the one-stage cost function $c_n: \sX_n\times\sA \rightarrow \R$ and the transition probability $p_n$ on $\sX_n$ given $\sX_n\times\sA$ by
\begin{align}
c_n(x_{n,i},a) &\coloneqq \int_{\S_{n,i}} c(x,a) \nu_{n,i}(dx) \nonumber \\
p_n(\,\cdot\,|x_{n,i},a) &\coloneqq \int_{\S_{n,i}} Q_n \ast p(\,\cdot\,|x,a) \nu_{n,i}(dx) \nonumber
\end{align}
where $Q_n\ast p(\,\cdot\,|x,a) \in \P(\sX_n)$ is the pushforward of the measure $p(\,\cdot\,|x,a)$ with respect to $Q_n$; that is,
\begin{align}
Q_n\ast p(x_{n,j}|x,a) = p\bigl(\S_{n,j}|x,a\bigr) \nonumber
\end{align}
for all $x_{n,j} \in \sX_n$. For each $n$, we define c-QOMDP$_n$ as a Markov decision process with the following components: $\sX_n$ is the state space, $\sA$ is the action space, $p_n$ is the transition probability and $c_n$ is the one-stage cost function. The following result establishes that as $n \rightarrow \infty$, we have c-QOMDP$_n \rightarrow$ c-QOMDP. To this end, let $f_n^{*}:\sX_n \rightarrow \sA$ be the optimal deterministic policy of c-QOMDP$_n$. We extend $f_n^{*}$ to $\sX$ by letting $ {\hat f}_n(x) = f_n^* \circ Q_n(x)$.

\begin{theorem}\label{compact:mainthm1}\cite[Theorem 2.2]{SaYuLi17}
The discounted cost of the policy ${\hat f}_n$, obtained by extending the optimal policy $f_n^{*}$ of c-QOMDP$_n$ to $\sX$,
converges to the optimal value function $J^{*}$ of the c-QOMDP 
\begin{align}
\lim_{n\rightarrow\infty} \| J({\hat f}_n,\,\cdot\,) - J^* \|_{\infty} = 0. \nonumber
\end{align}
Hence, to find a near optimal policy for the c-QOMDP, or equivalently for the QOMDP, it is sufficient to compute the optimal policy of QOMDP$_n$ for sufficiently large $n$, and then extend this policy to the original state space.
\end{theorem}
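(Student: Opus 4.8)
The plan is to argue entirely through the dynamic programming operators of the two models, exploiting that $\sX = \D(\H)$ is compact. First I would record the regularity that the c-QOMDP construction supplies and that the cited result requires: $\sA$ is finite, the cost $c = C$ is bounded and continuous on $\sX \times \sA$, and the kernel $p(\,\cdot\,|\rho,a) = \sum_{m} \delta_{f(\rho,a,m)}\, \rP_{\mathsf{noise}}(m|\rho,a)$ is weakly continuous in $\rho$, since $f$ and the weights $\rP_{\mathsf{noise}}(m|\,\cdot\,,a) = \tr(N_{a,d}^m (\,\cdot\,) N_{a,d}^{m,\dag})$ are continuous. Under these conditions the Bellman operator
\[
(T u)(x) \coloneqq \min_{a \in \sA}\Bigl[\, c(x,a) + \beta \int_{\sX} u(y)\, p(dy|x,a) \Bigr]
\]
maps $C_b(\sX)$ into itself and is a $\beta$-contraction, exactly as $\L_g$ was shown to be contractive in the proof of Theorem~\ref{verification}; its unique fixed point is $J^*$. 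Hence $J^* \in C_b(\sX)$, and by compactness of $\sX$ it is uniformly continuous. The finite model c-QOMDP$_n$ carries the analogous operator $T_n$ on functions over $\sX_n$, with unique fixed point $J_n^*$, which I extend to $\sX$ by $\tilde J_n \coloneqq J_n^* \circ Q_n$.

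The heart of the argument is to show $\|\tilde J_n - J^*\|_\infty \to 0$. The key observation is that, because $p_n$ is built from the pushforward $Q_n \ast p$ and the averages $\nu_{n,i}$, the grid sum $\sum_j J_n^*(x_{n,j})\, p_n(x_{n,j}|x_{n,i},a)$ collapses \emph{exactly} to $\int_{\S_{n,i}}\!\bigl(\int_{\sX}\tilde J_n(y)\,p(dy|x',a)\bigr)\nu_{n,i}(dx')$, so the quantization of the \emph{output} state is absorbed into the very gap $\tilde J_n - J^*$ we are estimating. Comparing $\tilde J_n(x) = (T_nJ_n^*)(Q_n(x))$ with $(TJ^*)(x)$ for $x \in \S_{n,i}$ then leaves two genuine error sources: a cost error $|c_n(Q_n(x),a) - c(x,a)| \le \omega_c(2/n)$, controlled by uniform continuity of $c$ and the cell-diameter bound $\diam(\S_{n,i}) < 2/n$; and a transition error governed by
\[
\Bigl| \int_{\sX} J^*(y)\, p(dy|x',a) - \int_{\sX} J^*(y)\, p(dy|x,a) \Bigr| \le \delta_n, \qquad x,x' \in \S_{n,i},
\]
where $\delta_n \to 0$ by weak continuity of $p$ together with uniform continuity of $J^*$ on the shrinking cells.

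Combining these estimates with $|\min f - \min g| \le \max_a |f(a) - g(a)|$ and $TJ^* = J^*$, I would obtain
\[
\|\tilde J_n - J^*\|_\infty \le \omega_c(2/n) + \beta\,\delta_n + \beta\,\|\tilde J_n - J^*\|_\infty,
\]
and rearranging gives $\|\tilde J_n - J^*\|_\infty \le (1-\beta)^{-1}\bigl(\omega_c(2/n) + \beta\,\delta_n\bigr) \to 0$. For the final assertion about the extended policy, I would run the same device on the (non-minimized) policy-evaluation operator $T_{\hat f_n}$, whose unique fixed point is $J(\hat f_n,\,\cdot\,)$. The computation above shows $\tilde J_n$ is an approximate fixed point, $\|\tilde J_n - T_{\hat f_n}\tilde J_n\|_\infty \le \eta_n \to 0$, where $\eta_n$ collects the same cost and transition errors as above, so the contraction property yields $\|\tilde J_n - J(\hat f_n,\,\cdot\,)\|_\infty \le (1-\beta)^{-1}\eta_n \to 0$; combined with $\tilde J_n \to J^*$ this gives $\|J(\hat f_n,\,\cdot\,) - J^*\|_\infty \to 0$, as claimed.

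The step I expect to be the main obstacle is the transition error $\delta_n \to 0$. Its clean control relies on having a \emph{single} modulus of continuity valid simultaneously for $J^*$ and for the maps $x \mapsto \int_{\sX} J^* \, dp(\,\cdot\,|x,a)$, uniformly over $a$ and over all cells; guaranteeing this uniformity is exactly where compactness of $\D(\H)$ and the Feller (weak continuity) property of $p$ must be used in tandem, and it is the delicate point that separates this weak-continuity argument from the easier Lipschitz/total-variation case.
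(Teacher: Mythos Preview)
The paper does not supply its own proof of this statement: it is quoted as \cite[Theorem~2.2]{SaYuLi17} and used as a black box, so there is nothing in the paper to compare against. Your sketch is, in substance, the argument of that reference: check the weak-Feller and compactness hypotheses for c-QOMDP, then compare the fixed points of $T$ and $T_n$ by a contraction estimate, using the identity
\[
\sum_j J_n^*(x_{n,j})\,p_n(x_{n,j}\mid x_{n,i},a)=\int_{\S_{n,i}}\!\int_{\sX}\tilde J_n(y)\,p(dy\mid x',a)\,\nu_{n,i}(dx')
\]
to absorb the output quantization into $\tilde J_n - J^*$. The derivation of $\|\tilde J_n - J^*\|_\infty \le (1-\beta)^{-1}(\omega_c(2/n)+\beta\delta_n)$ is correct, and your identification of $\delta_n\to 0$ as the place where compactness of $\D(\H)$ and weak continuity of $p$ are genuinely needed is on point.

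One point to tighten in the policy-evaluation step: the bound $\|\tilde J_n - T_{\hat f_n}\tilde J_n\|_\infty \le \eta_n$ does not involve literally the same transition error as before, because the integrand there is the piecewise-constant $\tilde J_n$, not the continuous $J^*$, and weak continuity of $p$ does not directly control $\int \tilde J_n\,dp(\cdot\mid x',a)-\int \tilde J_n\,dp(\cdot\mid x,a)$. The repair is immediate once the first part is in hand: add and subtract $J^*$ to bound this by $2\|\tilde J_n - J^*\|_\infty+\delta_n\to 0$. Alternatively, bypass the approximate-fixed-point device and use the standard inequality $\|J(\hat f_n,\cdot)-J^*\|_\infty \le (1-\beta)^{-1}\|T_{\hat f_n}J^*-TJ^*\|_\infty$, showing the right-hand side tends to zero by the same cell-diameter estimates together with $\|\tilde J_n-J^*\|_\infty\to 0$.
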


Considering Theorem~\ref{compact:mainthm1}, we can approximate any c-QOMDP (or equivalently any QOMDP) using finite-state c-QOMDP$_n$'s. Since any finite state c-QOMDP$_n$ can be integrated into our q-MDP model as shown in Section~\ref{classical-q}, we can approximate any QOMDP using q-MDPs equipped with classical policies.  This implies that we can assert, at least in an approximate sense, that q-MDPs are strictly more general than QOMDPs.

\subsection{A Special QOMDP: Finite-Action Approximations of q-MDPs}\label{sec2new-5}

In this section, we present a finite-action approximation of q-MDPs by formulating it as a specific instance of QOMDP. Following this, we illustrate its convergence to the original q-MDP.

To simplify the notation, we define $\sU \coloneqq \N_{(\H_{\sX} \rightarrow \H_{\sX} \otimes \H_{\sA})}$. Let $d_{\sU}$ denote the metric on $\sU$ induced by the operator norm. Since $\sU$ is compact, one can find a sequence of finite subsets $\{\Lambda_n\}$ of $\sU$ such that for all $n$
\begin{align}
\min_{\hat{\gamma}\in\Lambda_n} d_{\sU}(\gamma,\hat{\gamma}) < 1/n \text{  } \text{for all $\gamma\in\sU$}. \nonumber
\end{align}
We define the finite-action approximation of q-MDP, denoted as q-MDP$_{n}$, as the QOMDP with the following components: 
$$
\left(\D(\H_{\sX}),\sA,\sM, \T^{d}, \T^{i}, C\right)
$$
where $|\sA| = |\Lambda_n|$, $\T^d = \{\Id: a \in \sA\}$, $\T^{i} = \{\N \circ \N_{a,i}: a \in \sA, \,\, \N_{a,i} \in \Lambda_n\}$, 
$$C(\rho,a) = \tr\left(\N_{a,i}(\rho) c\right),$$ 
and $|\sM|$ is the number of maximum Kraus operators in the representations of quantum channels in $\T^{i}$. Here, since $\T^d = \{\Id: a \in \sA\}$ contains only the identity operator, there is no measurement performed on the system state. Hence, the state dynamics is deterministic. 

The following result establishes that as $n \rightarrow \infty$, q-MDP$_n \rightarrow$ q-MDP. To this end, let $\gamma_n^{*}:\D(\H_{\sX}) \rightarrow \sA$ be an optimal policy of q-MDP$_n$. Let $\{\rho^*_t\}$ be the sequence of state density operators of q-MDP$_n$ under this optimal policy. For each $t$, we define  
$$
\hat{\gamma}_{t,n}(\cdot) = \N_{a^*_t,i}(\cdot)
$$
where $\gamma_n^*(\rho^*_t) = a^*_t$. Hence $\hat{\bgamma} =\{\hat{\gamma}_{t,n}\} \in \Gamma_m$.

\begin{theorem}\label{compact:mainthm3}
The discounted cost of the policy ${\hat \gamma}_n$, obtained by using the optimal policy $\gamma_n^{*}$ of q-MDP$_n$ as above,
converges to the optimal value function $V^{*}$ of the q-MDP 
\begin{align}
\lim_{n\rightarrow\infty}  |V({\hat \gamma}_n,\rho_0) - V^*(\rho_0) | = 0. \nonumber
\end{align}
Hence, to find a near optimal policy for the q-MDP, it is sufficient to compute the optimal policy of q-MDP$_n$ for sufficiently large $n$, and then extend this policy to the q-MDP as above.
\end{theorem}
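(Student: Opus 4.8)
The plan is to exploit the contraction machinery already established for the dynamic programming operator $\L_g$ and to compare it with the analogous operator for the approximating model q-MDP$_n$. Let me denote by $V_{\mathrm{fg}}$ the fixed point of $\L_g$, which equals $V^*$ by Theorem~\ref{verification}. For each $n$, the model q-MDP$_n$ is a QOMDP whose action set corresponds to the finite net $\Lambda_n \subset \sU$, and its state dynamics is deterministic because $\T^d$ consists only of the identity channel. Consequently q-MDP$_n$ also admits a dynamic programming operator, call it $\L_g^n$, defined by restricting the minimization in $\L_g$ from all of $\sU = \N_{(\H_\sX \to \H_\sX \otimes \H_\sA)}$ to the finite subset $\Lambda_n$:
\begin{align*}
\L_g^n V(\rho) \coloneqq \min_{\gamma \in \Lambda_n} \left[ \langle c,\gamma(\rho) \rangle + \beta \, V(\N\circ\gamma(\rho)) \right].
\end{align*}
This $\L_g^n$ is likewise a $\beta$-contraction on $C_b(\D(\H_\sX))$ with sup-norm by the same argument given in the excerpt, hence has a unique fixed point $V_n^*$, which is the optimal value function of q-MDP$_n$.

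The central step is to bound $\|V_n^* - V^*\|_\infty$ and show it vanishes. First I would establish the key Lipschitz-type estimate: since $\Lambda_n$ is a $1/n$-net of $\sU$ in the operator norm $d_\sU$, for every $\rho$ and every $\gamma \in \sU$ there exists $\hat\gamma \in \Lambda_n$ with $d_\sU(\gamma,\hat\gamma) < 1/n$, and the map $\gamma \mapsto \langle c,\gamma(\rho)\rangle + \beta V_{\mathrm{fg}}(\N\circ\gamma(\rho))$ is (uniformly over $\rho$) continuous in $\gamma$. Using uniform continuity of $V_{\mathrm{fg}}$ on the compact set $\D(\H_\sX)$ together with the fact that $\gamma \mapsto \gamma(\rho)$ and $\gamma \mapsto \N\circ\gamma(\rho)$ are Lipschitz in the operator norm uniformly in $\rho$, I would conclude
\begin{align*}
\sup_{\rho} |\L_g^n V_{\mathrm{fg}}(\rho) - \L_g V_{\mathrm{fg}}(\rho)| =: \eps_n \to 0 \quad \text{as } n\to\infty.
\end{align*}
Then a standard contraction-perturbation argument gives $\|V_n^* - V^*\|_\infty \le \eps_n/(1-\beta) \to 0$: indeed $\|V_n^* - V_{\mathrm{fg}}\|_\infty = \|\L_g^n V_n^* - \L_g V_{\mathrm{fg}}\|_\infty \le \|\L_g^n V_n^* - \L_g^n V_{\mathrm{fg}}\|_\infty + \|\L_g^n V_{\mathrm{fg}} - \L_g V_{\mathrm{fg}}\|_\infty \le \beta\|V_n^* - V_{\mathrm{fg}}\|_\infty + \eps_n$, and rearranging yields the claim since $V_{\mathrm{fg}} = V^*$.

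Finally I would pass from the value-function convergence to the cost of the extended policy $\hat\gamma_n$. The optimal stationary policy $\gamma_n^*$ of q-MDP$_n$ selects at each state an action in $\Lambda_n$ attaining $\L_g^n V_n^*$; the induced deterministic channel sequence $\hat\gamma_{t,n}(\cdot) = \N_{a_t^*,i}(\cdot)$ is an admissible Markov policy for the original q-MDP. Writing out the discounted cost $V(\hat\gamma_n,\rho_0)$ telescopes against the fixed-point identity for $V_n^*$, so that $V(\hat\gamma_n,\rho_0) = V_n^*(\rho_0)$; combining with $\|V_n^* - V^*\|_\infty \to 0$ delivers $|V(\hat\gamma_n,\rho_0) - V^*(\rho_0)| \to 0$. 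The main obstacle I anticipate is making the estimate for $\eps_n$ genuinely uniform in $\rho$: one must verify that the modulus of continuity of $\gamma \mapsto \langle c,\gamma(\rho)\rangle + \beta V_{\mathrm{fg}}(\N\circ\gamma(\rho))$ does not degrade as $\rho$ varies, which relies on the compactness of $\D(\H_\sX)$, the uniform continuity of the fixed point $V_{\mathrm{fg}} \in C_b(\D(\H_\sX))$, and the boundedness of $c$ in Hilbert-Schmidt norm. Once this uniform modulus is in hand, the rest is routine contraction bookkeeping.
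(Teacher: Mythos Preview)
Your proposal is correct and follows essentially the same route as the paper. The paper's proof identifies $V(\hat\gamma_n,\rho_0)$ with the optimal value $J(\gamma_n^*,\rho_0)$ of q-MDP$_n$, observes that this value is the fixed point of the operator $\L_n$ obtained from $\L_g$ by restricting the minimization to $\Lambda_n$, and then invokes \cite[Theorem~3.2]{SaYuLi16} to conclude convergence of the fixed points from $\Lambda_n\to\sU$; your write-up replaces that citation with an explicit contraction--perturbation bound $\|V_n^*-V^*\|_\infty\le\eps_n/(1-\beta)$, which is precisely the content of that cited result in this setting.
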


\begin{proof}
By construction of q-MDP$_n$, we have 
$$
V({\hat \gamma}_n,\rho_0) = J(\gamma_n^*,\rho_0).
$$
By dynamic programming principle, $J(\gamma_n^*,\rho_0)$ is the fixed point of the following operator $\L_{n}:C_b(\D(\H_{\sX})) \rightarrow C_b(\D(\H_{\sX}))$, where
\begin{align*}
\L_{n} V(\rho) &\coloneqq \min_{a \in \sA} \left[ C(\rho,a) + \beta \, V(\N \circ \N_{a,i}(\rho)) \right] \\
&= \min_{\gamma \in \Lambda_n} \left[ \langle c,\gamma(\rho) \rangle + \beta \, V(\N \circ \gamma(\rho)) \right].
\end{align*}
This is almost the same as the dynamic programming operator $\L_g$ of q-MDP, whose fixed point it $V^*$. The only distinction is that we here optimizing the right hand side over $\Lambda_n$ instead of $\sU$. Since $\Lambda_n \rightarrow \sU$ as $n \rightarrow \infty$, by \cite[Theorem 3.2]{SaYuLi16}, we have 
$$
\lim_{n\rightarrow\infty}  |V({\hat \gamma}_n,\rho_0) - V^*(\rho_0) | = \lim_{n\rightarrow\infty}  |J(\gamma_n^*,\rho_0) - V^*(\rho_0) | = 0
$$
This completes the proof.
\end{proof}

Using Theorem~\ref{compact:mainthm3}, we can initially approximate any q-MDP with q-MDP$_n$, a specific instance of QOMDP. Then, by reformulating q-MDP$_n$ as a classical MDP (since it is a QOMDP), we can subsequently approximate it using finite-state MDPs, as demonstrated by Theorem~\ref{compact:mainthm1}. Since any finite-state MDP can be represented as a q-MDP with classical policies (see Section~\ref{classical-q}), we can approximate any q-MDP with q-MDPs with classical policies, where the dimensions of the Hilbert spaces in the latter go to infinity as the approximation converges to zero. Hence, even though the original q-MDP has finite-dimensional Hilbert spaces, in the approximating q-MDP with classical policies, the dimension could be significantly higher than that.

\section{Classes of Quantum Policies}
\label{sec1}

We recall that in d-MDP, a policy is a sequence of classical channels $\bgamma=\{\gamma_{t}\}$ from $\sX$ to $\sX\times\sA$ such that for all $t$
\begin{align} 
\gamma_t(\mu) \in \sA(\mu), \,\, \forall \,\mu \in \P(\sX), \label{inv_classical}
\end{align}
where
$\sA(\mu) \coloneqq \left\{\nu \in \P(\sX \times \sA): v(\,\cdot\,\times\sA) = \mu(\,\cdot\,)\right\}.$ 
Note that we do not impose a condition similar to (\ref{inv_classical}) on quantum (Markov) policies. Presently, any quantum channel from $\H_{\sX}$ to $\H_{\sX} \otimes \H_{\sA}$ is allowed as a policy. Accordingly, the algebraic characterization of any policy is only restricted to the following conditions: completely positive and trace preserving super-operators from $\H_{\sX}$ to $\H_{\sX} \otimes \H_{\sA}$. However, we now introduce different quantum variations of the condition (\ref{inv_classical}) and define corresponding classes of policies like open-loop quantum policies and classical-state-preserving closed-loop  quantum policies. We pursue this for two primary reasons. Firstly, we aim to establish a condition akin to (\ref{inv_classical}) within the quantum case to mirror the formulation given in the classical scenario, allowing us to perceive the quantum formulation as an extension of the classical one. Secondly, the original problem formulation is overly broad and hence unmanageable. Indeed, by verification theorem, to compute the optimal value function and optimal quantum policy, we can apply value-iteration algorithm; that is, we start with some function $V_0 \in C_b(\D(\H_{\sX}))$ and repeatedly apply the operator $\L_{g}$ to that function so that $\L_{g}^n V_0 \rightarrow V^*$ as $n\rightarrow\infty$. In order to repeatedly apply $\L_{g}$ and compute the optimal policy once $V^*$ is obtained, it is necessary to solve the following optimization problem at each time step for given $\rho$ and $V$:
$$
\min_{\gamma \in \N_{(\H_{\sX} \rightarrow \H_{\sX} \otimes \H_{\sA})}} \left[ \langle c,\gamma(\rho)\rangle + \beta \, V(\N \circ \gamma(\rho)) \right].
$$
This optimization problem becomes significantly challenging when there is no particular structure on $\mathcal{N}_{(\mathcal{H}_{\sX} \rightarrow \mathcal{H}_{\sX} \otimes \mathcal{H}_{\sA})}$ and $V^*$. Another alternative method to compute nearly optimal quantum policies is to utilize the approximation results established in Section~\ref{sec2new-5}. Note that any q-MDP can be approximated using finite-action q-MDPs, which can be formulated as QOMDPs. Since any QOMDP can be approximated via a classical MDP with a finite state space, we can approximate any q-MDP using finite-state MDPs. In this approximation, the cardinality of the finite state space in the approximate model tends to infinity as the approximation becomes finer. While this is generally not a significant issue, obtaining the approximate finite model requires discretizing the set of density operators in a nearest neighbor sense. However, although this is theoretically possible, it poses computational challenges. Similarly, to obtain a finite-action approximation of q-MDPs, which can be formulated as QOMDPs, we need to discretize the set of quantum channels $\N_{(\H_{\sX} \rightarrow \H_{\sX} \otimes \H_{\sA})}$. While this is again theoretically feasible due to the compactness of the set, it poses significant computational challenges, as we must preserve the structural properties of complete positivity and trace preservation during the discretization process. As a result, from a computational standpoint, the current formulation of the q-MDP problem presents substantial difficulties. Therefore, it becomes essential to impose some structure on the set of quantum policies. 

Overall, we seek a level of generality that encompasses both the classical formulation and the quantum MDP models introduced in prior literature, such as QOMDP and yet tractable. Hence, we opt to introduce specific constraints by mimicking the condition (\ref{inv_classical}) in the classical d-MDP case. 

In d-MDPs, condition (\ref{inv_classical}) can be rephrased as follows: the marginalization channel 
$\Mar(\nu) \coloneqq \nu(\cdot \times \sA),$ 
where $\Mar: \P(\sX\times\sA) \rightarrow \P(\sX)$, is the inverse of the channel $\gamma_t$; that is, 
$\Mar \circ \gamma_t(\mu) = \mu \,\,\,\, \forall \mu \in \P(\sX).$ 
This relationship can be expressed as $\Mar = \Inv(\gamma_t)$, where the inverse operator $\Inv$ acts on classical channels and reverses the effects of the channel.

Motivated by alternative description of (\ref{inv_classical}) above, we introduce a general reversibility condition on the set of admissible quantum policies. Recall that $\N_{(\H_{1} \rightarrow \H_{2})}$ represents the collection of quantum channels from $\D(\H_{1})$ to $\D(\H_{2})$. Then, the operator 
\begin{align*}
\Inv:\N_{(\H_{\sX} \rightarrow \H_{\sX} \otimes \H_{\sA})} \rightarrow \N_{(\H_{\sX} \otimes \H_{\sA} \rightarrow \H_{\sX})}
\end{align*} 
is defined to invert any channel in $\N_{(\H_{\sX} \rightarrow \H_{\sX} \otimes \H_{\sA})}$ into a channel in $N_{(\H_{\sX} \otimes \H_{\sA} \rightarrow \H_{\sX})}$ in some sense. Although the specific details of this operator are not provided here, upcoming sections propose various alternatives.

Furthermore, we introduce the quantum version of a marginalization channel. The quantum marginalization channel 
\begin{align*}
\Mar:\D(\H_{\sX} \otimes \H_{\sA}) \rightarrow \D(\H_{\sX})
\end{align*}
is a quantum channel that extends the classical probability-based marginalization channel.

Now it is time to define quantum policies that satisfy a reversibility condition similar to (\ref{inv_classical}).  

\begin{definition}\label{admissible}
For an operator $\Inv:\N_{(\H_{\sX} \rightarrow \H_{\sX} \otimes \H_{\sA})} \rightarrow \N_{(\H_{\sX} \otimes \H_{\sA} \rightarrow \H_{\sX})}$, an $\Inv$-Markov quantum policy constitutes a sequence of quantum channels $\bgamma=\{\gamma_{t}\}$ from $\H_{\sX}$ to $\H_{\sX} \otimes \H_{\sA}$ such that
$
\Mar = \Inv(\gamma_t) \,\,\, \forall t.
$
\end{definition}

Throughout the rest of this paper, we explore various frameworks for the operator $\Inv$ and the quantum channel $\Mar$ with increasing generality. Initially, we  examine the most intuitive extensions of the classical operations $\Inv$ and $\Mar$ to their quantum counterparts. This exploration naturally leads to the definition of open-loop quantum policies. Before proceeding to the next section, let us formally define the set of classical policies.

\begin{definition}
An admissible classical policy is a sequence quantum channels $\bgamma=\{\gamma_{t}\}$ from $\H_{\sX}$ to $\H_{\sX} \otimes \H_{\sA}$, where, for all $t$, there exists a classical channel $W_t: \P(\sX) \rightarrow \P(\sX\times\sA)$ realized by the stochastic kernel $\pi_t$ from $\sX$ to $\sA$ such that 
$$
\gamma_t(\rho) = \sum_{(x,a) \in \sX \times \sA} \sqrt{\pi_t(a|x)} \, |x,a\rangle \, \langle x| \, \rho |x\rangle \, \langle x,a| \, \sqrt{\pi_t(a|x)}. 
$$
Here, \(\{|x\rangle\}\) represents a fixed orthonormal basis for \(\H_{\sX}\), while \(\{|x,a\rangle\}\) is an orthonormal basis for \(\H_{\sX} \otimes \H_{\sA}\), constructed from the orthonormal basis \(\{|x\rangle\}\) of \(\H_{\sX}\) and a fixed orthonormal basis \(\{|a\rangle\}\) for \(\H_{\sA}\).
This set of policies is denoted by $\Gamma_c$. 
\end{definition}

\subsection{Open-loop Quantum Policies}\label{sec2}

In this section, the first set of descriptions of the $\Inv$ and $\Mar$ operations, naturally extending the classical descriptions to the quantum case, are provided. Firstly, the inverse operator 
$
\Inv:\N_{(\H_{\sX} \rightarrow \H_{\sX} \otimes \H_{\sA})} \rightarrow \N_{(\H_{\sX} \otimes \H_{\sA} \rightarrow \H_{\sX})},
$ 
which naturally extends the inverse operator in classical case, is defined as the canonical inverse of any quantum channel; that is, for any $\gamma \in \N_{(\H_{\sX} \rightarrow \H_{\sX} \otimes \H_{\sA})}$, we define $\Inv(\gamma) \eqqcolon \gamma^{-1} \in \N_{(\H_{\sX} \otimes \H_{\sA} \rightarrow \H_{\sX})}$ as the quantum channel satisfying the following:
$
\gamma^{-1} \circ \gamma(\rho) = \rho, \, \forall \rho \in \D(\H_{\sX}).
$
To differentiate this from other inverse that is introduced later, let us denote this as $\Inv_{\C}$.

The definition of the marginalization channel is guided by the following observation: the quantum version of the classical marginalization operation is the partial trace operator. Hence, we can define the $\Mar$ as the partial trace channel $\tr_{\sA}: \D(\H_{\sX}\otimes\H_{\sA}) \rightarrow \D(\H_{\sX})$.

\begin{definition}
With these specific descriptions for $(\Inv,\Mar)$, an open-loop quantum policy is defined to be a sequence quantum channels $\bgamma=\{\gamma_{t}\}$ from $\H_{\sX}$ to $\H_{\sX} \otimes \H_{\sA}$ such that, for all $t$, 
$
\tr_{\sA} = \Inv_{\C}(\gamma_t), 
$
or equivalently
$
\gamma_t(\rho) \in \sA(\rho), \, \forall \rho \in \D(\H_{\sX}).
$
Let $\Gamma_o$ denote the set of open-loop policies.
\end{definition}

The following structural result clarifies why we refer to such policies as \emph{open-loop}. The proof is given in the appendix. 

\begin{proposition}\label{structure-qmdp}
Let $\gamma: \D(\H_{\sX}) \rightarrow \D(\H_{\sX}\otimes\H_{\sA})$ be a quantum channel. It satisfies the following reversibility condition: $\tr_{\sA}(\gamma(\rho)) = \rho$ for all $\rho \in \D(\H_{\sX})$ if and only if there exists $\xi \in \D(\H_{\sA})$ such that $\gamma(\rho) = \rho  \otimes \xi$ for all $\rho \in \D(\H_{\sX})$.
\end{proposition}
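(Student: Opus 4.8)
The plan is to dispatch the easy ("if") direction in one line and then concentrate on the converse. If $\gamma(\rho) = \rho \otimes \xi$ for a fixed $\xi \in \D(\H_{\sA})$, then $\tr_{\sA}(\gamma(\rho)) = \rho \, \tr(\xi) = \rho$, since every density operator has unit trace; so the reversibility condition holds automatically.

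For the converse, I would work entirely at the level of a Kraus (operator-sum) representation, which is available by complete positivity and trace preservation of $\gamma$. Fix Kraus operators $\{K_l\}$ from $\H_{\sX}$ to $\H_{\sX}\otimes\H_{\sA}$ with $\gamma(\rho) = \sum_l K_l \rho K_l^*$ and $\sum_l K_l^* K_l = \Id_{\H_{\sX}}$. Fixing an orthonormal basis $\{|j\rangle\}$ of $\H_{\sA}$, the partial trace has Kraus operators $\{\Id_{\H_{\sX}} \otimes \langle j|\}_j$, so the composed channel $\tr_{\sA} \circ \gamma$ is represented by $B_{jl} := (\Id_{\H_{\sX}} \otimes \langle j|) K_l$, i.e.\ $\tr_{\sA}(\gamma(\rho)) = \sum_{j,l} B_{jl}\, \rho\, B_{jl}^*$. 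The hypothesis says this composed channel is exactly the identity channel on $\H_{\sX}$.

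The crux of the argument --- and the step I expect to be the main obstacle --- is the rigidity of the identity channel's Kraus decomposition: I must show each $B_{jl}$ is a scalar multiple of $\Id_{\H_{\sX}}$. I would prove this directly rather than invoke the general unitary-freedom theorem. Evaluating the identity $\sum_{j,l} B_{jl}|\phi\rangle\langle\phi| B_{jl}^* = |\phi\rangle\langle\phi|$ on an arbitrary unit vector $|\phi\rangle$ expresses a rank-one projection as a sum of the positive operators $B_{jl}|\phi\rangle\langle\phi|B_{jl}^*$; testing against any $|\chi\rangle \perp |\phi\rangle$ forces $\langle\chi| B_{jl}|\phi\rangle = 0$ for all $j,l$, so $B_{jl}|\phi\rangle$ is parallel to $|\phi\rangle$. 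Since $|\phi\rangle$ is arbitrary, every vector is an eigenvector of $B_{jl}$, whence $B_{jl} = c_{jl}\,\Id_{\H_{\sX}}$ for scalars $c_{jl}$.

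Finally I would reassemble the Kraus operators. The relation $(\Id \otimes \langle j|)K_l = c_{jl}\Id$ for every $j$ means $K_l|\phi\rangle = |\phi\rangle \otimes |w_l\rangle$ with $|w_l\rangle := \sum_j c_{jl}|j\rangle$, i.e.\ $K_l = \Id_{\H_{\sX}} \otimes |w_l\rangle$. Substituting gives $\gamma(\rho) = \sum_l (\Id \otimes |w_l\rangle)\rho(\Id \otimes \langle w_l|) = \rho \otimes \xi$ with $\xi := \sum_l |w_l\rangle\langle w_l| \geq 0$. The trace-preservation condition $\sum_l K_l^* K_l = \sum_l \|w_l\|^2 \Id = \Id$ gives $\sum_l \|w_l\|^2 = 1$, i.e.\ $\tr(\xi) = 1$, so $\xi \in \D(\H_{\sA})$, completing the proof. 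As a remark, the same conclusion can be reached more abstractly by noting that the $\H_{\sX}$-marginal of $\gamma(|\phi\rangle\langle\phi|)$ is the pure state $|\phi\rangle\langle\phi|$, which forces $\gamma(|\phi\rangle\langle\phi|)$ to be a product; but pinning down that the $\H_{\sA}$-factor is independent of $|\phi\rangle$ again requires the linearity and positivity input above, so the Kraus route is the cleaner one.
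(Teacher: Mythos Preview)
Your proof is correct and follows essentially the same route as the paper's: both arguments compose the Kraus operators of $\gamma$ with those of $\tr_{\sA}$, use the extremality of pure states (your orthogonality test against $|\chi\rangle \perp |\phi\rangle$ is precisely how the paper's extremal-point claim is verified) to force $(\Id \otimes \langle j|)K_l$ to be a scalar multiple of the identity, and then reassemble to obtain $K_l = \Id \otimes |w_l\rangle$ and $\gamma(\rho) = \rho \otimes \xi$. Your write-up is in fact slightly more explicit than the paper's at the step ``every vector is an eigenvector $\Rightarrow$ scalar multiple of the identity,'' but the overall strategy is the same.
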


In the literature, channels described in Proposition~\ref{structure-qmdp} are commonly referred to as \emph{appending channels} \cite{Wil13}. Note that above proposition implies that for any policy $\bgamma = \{\gamma_t\}$ satisfying above invertibility condition, there exists a sequence of density operators $\{\pi_t\} \subset \D(\H_{\sA})$ such that $\gamma_t(\rho) = \rho \otimes \pi_t$ for all $t$. Hence, in view of this, one can alternatively identify the policy $\bgamma$ via this sequence $\{\pi_t\}$. 

Note that, due to the above structural result, these quantum policies do not utilize state information and the policy solely relies on the initial density operator. Consequently, we can categorize these policies as \emph{open-loop quantum policies}, drawing an analogy to the concept of open-loop policies in control theory.

\subsection{Classical-state-preserving Closed-loop  Quantum Policies}\label{sec3}

In this section, we define classical-state-preserving closed-loop  quantum policies for q-MDPs, enabling the controller to utilize current state information. We achieve this by relaxing the reversibility condition inherent in the definition of open-loop policies. In other words, we propose an alternative for the operator $\Inv$, resulting in policies that incorporate current state information. Hence, we refer to them as classical-state-preserving closed-loop  policies.

There is another reason for introducing classical-state-preserving closed-loop  policies, or equivalently, adjusting the operator $\Inv$. Let us elaborate on this point as well. Recall that 
in the quantum version of d-MDP, a classical policy is a sequence quantum channels $\bgamma=\{\gamma_{t}\}$ from $\H_{\sX}$ to $\H_{\sX} \otimes \H_{\sA}$, where, for all $t$, there exists a classical channel $W_t: \P(\sX) \rightarrow \P(\sX\times\sA)$ along with the associated stochastic kernel $\pi_t$ from $\sX$ to $\sA$ such that 
$$
\gamma_t(\rho) = \sum_{(x,a) \in \sX \times \sA} \sqrt{\pi_t(a|x)} \, |x,a\rangle \, \langle x| \, \rho |x\rangle \, \langle x,a| \, \sqrt{\pi_t(a|x)}. 
$$
It is evident that $\tr_{\sA} \circ \gamma_t(\rho) = \rho$ does not hold true for all $\rho \in \D(\H_{\sX})$ for such policies; that is $\tr_{\sA} \neq \Inv_{\C}(\gamma_t)$. As a result, the reversibility requirement in open-loop policies is not satisfied by quantum versions of classical policies. This is as expected because classical policies in d-MDPs operate in a classical-state-preserving closed-loop  manner, utilizing the current state information, unlike open-loop quantum policies.

Therefore, the current definition of q-MDPs with open-loop policies does not allow us to perceive them as straightforward quantum generalizations of classical MDPs. Consequently, it becomes necessary to relax the reversibility constraint by using another inverse operator $\Inv$ in place of $\Inv_{\C}$, enabling quantum versions of classical policies to fulfill it as well. This would allow us to consider q-MDPs as a generalization of d-MDPs.

Recall that in the definition of open-loop policies, we require that an admissible policy is a sequence quantum channels $\bgamma=\{\gamma_{t}\}$ from $\H_{\sX}$ to $\H_{\sX} \otimes \H_{\sA}$ such that, for all $t$, partial trace channel is the full reverse of the quantum channel $\gamma_t$; that is, 
$
\tr_{\sA}(\gamma_t(\rho)) = \rho, \,\,\, \forall \rho \in \D(\H_{\sX}).
$
We can relax this condition in the following way and obtain an alternative inverse operator that leads to the definition of classical-state-preserving closed-loop  policies. To this end, recall the fixed orthonormal basis \(\{|x\rangle\}\) for \(\H_{\sX}\) used in the definition of classical policies, and define \(\S = \{|x\rangle \langle x| : x \in \sX\}\).
Then we have the following definition for the alternative inverse operator.

\begin{definition}\label{inverse1}\cite{Shi13}
The inverse operator  
$
\Inv_{\S}:\N_{(\H_{\sX} \rightarrow \H_{\sX} \otimes \H_{\sA})} \rightarrow \N_{(\H_{\sX} \otimes \H_{\sA} \rightarrow \H_{\sX})}
$
is defined as follows: for any $\gamma \in \N_{(\H_{\sX} \rightarrow \H_{\sX} \otimes \H_{\sA})}$, we define $\Inv_{\S}(\gamma) \eqqcolon \gamma^{-1} \in \N_{(\H_{\sX} \otimes \H_{\sA} \rightarrow \H_{\sX})}$ as the quantum channel satisfying 
$
\gamma^{-1} \circ \gamma(\rho) = \rho, \, \forall \rho \in \S.
$
\end{definition}

As the convex closure of $\S$ is a strict subset of $\D(\H_{\sX})$, this requirement is considerably less stringent than the previous reversibility condition in open-loop policies. Introducing this new inverse operator defines a new set of admissible policies termed as classical-state-preserving closed-loop  policies, a terminology that become clear once we establish the structural properties of such policies.

\begin{definition}\label{closed1}
A \emph{classical-state-preserving closed-loop  policy} is a sequence quantum channels $\bgamma=\{\gamma_{t}\}$ from $\H_{\sX}$ to $\H_{\sX} \otimes \H_{\sA}$ such that, for all $t$, $\tr_{\sA}= \Inv_{\S}(\gamma_t)$; that is, 
$
\tr_{\sA}(\gamma_t(\rho)) = \rho, \,\,\, \forall \rho \in \S.
$
\end{definition}

Let $\cC_w$ denote the set of quantum channels from $\H_{\sX}$ to $\H_{\sX}\otimes\H_{\sA}$ satisfying above relaxed reversibility constraint. Define the set of classical-state-preserving closed-loop  quantum policies as follows
$
\Gamma_w \coloneqq \{\bgamma=\{\gamma_{t}\}: \gamma_t \in {\cC}_w \,\, \forall t\}.
$
Hence, $\Gamma_o \subset \Gamma_w$. Additionally, quantum adaptations of classical policies satisfy this looser requirement, which is stated as a separate result. 

\begin{theorem}
We have $\Gamma_c \subset \Gamma_w$.
\end{theorem}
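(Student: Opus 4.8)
The plan is to take an arbitrary classical policy $\bgamma = \{\gamma_t\} \in \Gamma_c$, with each $\gamma_t$ realized by a stochastic kernel $\pi_t$ from $\sX$ to $\sA$ exactly as in the definition of $\Gamma_c$, and to verify directly that each $\gamma_t$ lies in $\cC_w$, i.e. that $\tr_{\sA}(\gamma_t(\rho)) = \rho$ holds for every $\rho \in \S = \{|x\rangle\langle x| : x \in \sX\}$. Since this relaxed reversibility condition on $\S$ is precisely the defining property of membership in $\Gamma_w$, establishing it for every $t$ immediately gives $\bgamma \in \Gamma_w$, and hence the desired inclusion $\Gamma_c \subset \Gamma_w$.

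First I would fix a classical pure state $|x_0\rangle\langle x_0| \in \S$ and compute $\gamma_t(|x_0\rangle\langle x_0|)$ from its operator-sum expression. Because $\{|x\rangle\}$ is orthonormal, the scalar factor $\langle x | x_0 \rangle \langle x_0 | x \rangle = \delta_{x,x_0}$ collapses the double sum over $(x,a)$ to a single sum over $a$, and the product $\sqrt{\pi_t(a|x_0)}\,\sqrt{\pi_t(a|x_0)}$ yields $\pi_t(a|x_0)$. This gives $\gamma_t(|x_0\rangle\langle x_0|) = \sum_{a \in \sA} \pi_t(a|x_0)\, |x_0,a\rangle\langle x_0,a|$. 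The conceptual content of this step is that a classical basis state is mapped to a diagonal (classical) state on $\H_{\sX} \otimes \H_{\sA}$ supported on the product kets $|x_0,a\rangle = |x_0\rangle \otimes |a\rangle$, with weights prescribed by the kernel $\pi_t(\,\cdot\,|x_0)$.

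Next I would apply the partial trace $\tr_{\sA}$ term by term, using the factorization $|x_0,a\rangle\langle x_0,a| = |x_0\rangle\langle x_0| \otimes |a\rangle\langle a|$ together with $\tr(|a\rangle\langle a|) = 1$, so that each summand contributes $\pi_t(a|x_0)\,|x_0\rangle\langle x_0|$. Summing over $a$ and invoking the normalization $\sum_{a \in \sA} \pi_t(a|x_0) = 1$ of the probability kernel recovers exactly $|x_0\rangle\langle x_0|$. Since $x_0$ was arbitrary, this yields $\tr_{\sA}(\gamma_t(\rho)) = \rho$ for all $\rho \in \S$ and all $t$, which is what is required.

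I do not anticipate any genuine obstacle: the argument is a direct computation whose essential content is conceptual rather than technical. The crucial observation, and the reason the inclusion holds for $\Gamma_w$ while it would fail for $\Gamma_o$, is that the relaxed inverse operator $\Inv_{\S}$ demands reversibility only on the classical pure states in $\S$, not on the full set $\D(\H_{\sX})$. A classical policy leaves the $\H_{\sX}$ register intact while attaching an action register correlated according to $\pi_t$; tracing out that action register therefore restores any input already diagonal in the fixed basis $\{|x\rangle\}$, even though it would not restore a genuine quantum superposition. The only points requiring a little care are the bookkeeping of orthonormality that collapses the sum and the normalization of $\pi_t$, both of which are routine.
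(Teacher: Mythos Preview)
Your proposal is correct and follows essentially the same approach as the paper: evaluate $\gamma_t$ on an arbitrary classical pure state $|x_0\rangle\langle x_0|\in\S$, use orthonormality of $\{|x\rangle\}$ to collapse the Kraus sum, then apply $\tr_{\sA}$ and invoke the normalization of $\pi_t(\cdot\,|x_0)$ to recover the input. The only cosmetic difference is that you combine the two $\sqrt{\pi_t(a|x_0)}$ factors into $\pi_t(a|x_0)$ and make the tensor factorization $|x_0,a\rangle\langle x_0,a|=|x_0\rangle\langle x_0|\otimes|a\rangle\langle a|$ explicit before tracing, whereas the paper's version leaves the square roots in place throughout; the computations are identical.
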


\begin{proof}
Indeed, for a classical policy $\bgamma = \{\gamma_t\}$, for any $t$, $\gamma_t$ is of the following form 
$$
\gamma_t(\rho) = \sum_{(x,a) \in \sX \times \sA} \sqrt{\pi_t(a|x)} \, |x,a\rangle \, \langle x| \, \rho |x\rangle \, \langle x,a| \, \sqrt{\pi_t(a|x)}.
$$
If $\rho = |y\rangle \langle y|$ for some $y \in \sX$, then 
$
\gamma_t(\rho) = \sum_{a \in \sA} \sqrt{\pi_t(a|y)} \, |y,a\rangle  \langle y,a| \, \sqrt{\pi_t(a|y)}.
$
This implies that $\tr_{\sA}(\gamma_t(\rho)) = \sum_{a \in \sA} \sqrt{\pi_t(a|y)} \, |y\rangle  \langle y| \, \sqrt{\pi_t(a|y)} = |y\rangle \langle y| = \rho$; and so, $\tr_{\sA} = \Inv_{\S}(\gamma_t)$. Hence $\Gamma_c \subset \Gamma_w$. 
\end{proof}

As a result, q-MDPs with classical-state-preserving closed-loop  policies is truly a generalization of the set of d-MDPs. We refer to these q-MDPs as qw-MDPs to distinguish it from q-MDPs with open-loop policies.

We now obtain the following structure of classical-state-preserving closed-loop  quantum policies, whose proof is given in the appendix.

\begin{proposition}\label{structure-qwmdp}
Let $\gamma: \D(\H_{\sX}) \rightarrow \D(\H_{\sX}\otimes\H_{\sA})$ be a quantum channel. It is in $\cC_w$ if and only if there exists a collection of vectors $\{|\phi_{x,a} \rangle\}_{(x,a)\in\sX\times\sA}$ in some Hilbert space $\H_L$ with $\dim(\H_L) \leq |\sX|^2|\sA|$ such that $\sum_{a \in \sA} \langle \phi_{x,a},\phi_{x,a} \rangle=1$ for each $x \in \sX$ and 
\begin{align*}
\gamma(|\psi\rangle \langle \psi|) = \sum_{x,y \in \sX} \langle x,\psi \rangle \langle \psi,y \rangle |x\rangle \langle y| \otimes \left(\sum_{a,b \in \sA} \langle \phi_{y,b},\phi_{x,a} \rangle \, |a\rangle \langle b | \right)
\end{align*}
for each pure state $|\psi\rangle \in \H_{\sX}$.
\end{proposition}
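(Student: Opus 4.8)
The plan is to prove both implications through the Kraus (operator-sum) representation of $\gamma$, reading the membership $\gamma \in \cC_w$ directly off the Kraus operators. First I would fix a Kraus representation $\gamma(\rho) = \sum_{l=1}^{L} K_l \rho K_l^*$ with $\sum_l K_l^* K_l = \Id$, where each $K_l : \H_{\sX} \to \H_{\sX}\otimes\H_{\sA}$. Taking a \emph{minimal} representation bounds $L$ by the product of the input and output dimensions, $L \leq |\sX|\cdot|\sX||\sA| = |\sX|^2|\sA|$, which is exactly the claimed bound once I set $\H_L \coloneqq \rC^{L}$ with orthonormal basis $\{|l\rangle\}$. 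Writing $K_l|x\rangle = \sum_{y,a}(K_l)_{(y,a),x}\,|y,a\rangle$, the central computation is to expand $\tr_{\sA}(\gamma(|x\rangle\langle x|))$ in these coordinates and equate it with $|x\rangle\langle x|$.

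For the forward direction ($\gamma\in\cC_w \Rightarrow$ structure), this partial-trace computation gives $\tr_{\sA}(\gamma(|x\rangle\langle x|)) = \sum_{y,y'}\bigl(\sum_{l,a}(K_l)_{(y,a),x}\,\overline{(K_l)_{(y',a),x}}\bigr)\,|y\rangle\langle y'|$. Matching the $y=y'\neq x$ terms to zero yields $\sum_{l,a}|(K_l)_{(y,a),x}|^2 = 0$, which forces $(K_l)_{(y,a),x}=0$ whenever $y\neq x$; hence $K_l|x\rangle = |x\rangle\otimes\bigl(\sum_a (K_l)_{x,a}|a\rangle\bigr)$, i.e.\ the Kraus operators preserve the classical state on the $\H_{\sX}$ factor. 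I then set $|\phi_{x,a}\rangle \coloneqq \sum_l (K_l)_{x,a}\,|l\rangle \in \H_L$; the surviving $y=y'=x$ condition reads $\sum_a\langle\phi_{x,a},\phi_{x,a}\rangle = \sum_{l,a}|(K_l)_{x,a}|^2 = 1$, the required normalization. Finally, expanding $\gamma(|\psi\rangle\langle\psi|)=\sum_l K_l|\psi\rangle\langle\psi|K_l^*$ for $|\psi\rangle = \sum_x \langle x,\psi\rangle\,|x\rangle$, using $|x,a\rangle\langle y,b| = |x\rangle\langle y|\otimes|a\rangle\langle b|$ and $\sum_l (K_l)_{x,a}\overline{(K_l)_{y,b}} = \langle\phi_{y,b},\phi_{x,a}\rangle$, reproduces the stated formula verbatim.

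For the converse, given vectors $\{|\phi_{x,a}\rangle\}$ with $\sum_a\langle\phi_{x,a},\phi_{x,a}\rangle=1$ for each $x$, I would define $K_l \coloneqq \sum_{x,a}\langle l,\phi_{x,a}\rangle\,|x,a\rangle\langle x|$ and check $\sum_l K_l^* K_l = \sum_{x}\bigl(\sum_a\langle\phi_{x,a},\phi_{x,a}\rangle\bigr)|x\rangle\langle x| = \Id$, so that $\gamma(\rho)\coloneqq\sum_l K_l\rho K_l^*$ is a genuine quantum channel (complete positivity being automatic from the Kraus form). Running the forward computation in reverse then shows this $\gamma$ realizes the claimed formula and satisfies $\tr_{\sA}(\gamma(|x\rangle\langle x|))=|x\rangle\langle x|$, hence $\gamma\in\cC_w$. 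I expect the main obstacle to be the forward direction: carefully justifying that the off-diagonal vanishing of the partial trace forces the block structure $K_l|x\rangle \in |x\rangle\otimes\H_{\sA}$ (which hinges on $\sum_{l,a}|(K_l)_{(y,a),x}|^2 = 0$ making every summand vanish), and then tracking the two state-indices $x,y$ and four action-indices $a,b$ through the final expansion without conjugation errors.
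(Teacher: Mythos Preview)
Your proposal is correct and follows essentially the same strategy as the paper's proof: both argue via a Kraus representation, show that $\tr_{\sA}(\gamma(|x\rangle\langle x|))=|x\rangle\langle x|$ forces $K_l|x\rangle\in|x\rangle\otimes\H_{\sA}$, and then package the surviving coefficients into the vectors $|\phi_{x,a}\rangle\in\H_L$. The only cosmetic differences are that the paper phrases the key vanishing step as ``pure states are extreme points of $\D(\H_{\sX})$'' (where you use the equivalent sum-of-squares positivity argument on the diagonal), and for the converse the paper builds the Stinespring isometry $V:|x\rangle\mapsto|x\rangle\otimes\sum_a|a\rangle\otimes|\phi_{x,a}\rangle$ rather than writing down the Kraus operators directly; these are interchangeable presentations of the same construction.
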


Note that policies of the form
\begin{align*}
\gamma(|\psi\rangle \langle \psi|) = \sum_{x,y \in \sX} \langle x,\psi \rangle \langle \psi,y \rangle |x\rangle \langle y| \otimes \left(\sum_{a,b \in \sA} \langle \phi_{y,b},\phi_{x,a} \rangle \, |a\rangle \langle b | \right)
\end{align*}
obviously use the current state information because of the inner products $\langle \phi_{y,b},\phi_{x,a} \rangle$. As a result, we can categorize   these policies as \emph{classical-state-preserving closed-loop  quantum policies}.

\section{Concluding Remarks}

In conclusion, the objective of this work is to introduce a quantum adaptation of classical MDPs. The approach involves a series of transformations: first, classical MDPs are converted into deterministic models by redefining both state and action spaces as probability measures. Following this initial step, probability measures are replaced with density operators, and the traditional state transition channel is substituted with a quantum channel. These modifications collectively give rise to a quantum version of MDPs. Subsequently, our focus shifts to establishing the verification theorem for q-MDPs and its finite action approximation, which can be cast as a special instance of QOMDP. Finally,  we introduce two categories of policies for q-MDPs: open-loop policies and classical-state-preserving closed-loop  policies, and establish structural results for these classes of policies.

\subsection{Upcoming Work and Future Research Directions}

\subsection{Upcoming Work}
In our upcoming work on q-MDPs, we will develop algorithms for open-loop and classical-state-preserving closed-loop  policies that adapt dynamic programming and linear programming principles to q-MDPs. We will also show that any q-MDP can be approximated via q-MDPs with classical-state-preserving closed-loop  policies, though finer approximations may require larger Hilbert spaces.

\begin{itemize} \item[\ding{43}] We begin by developing algorithms for computing optimal policies and value functions for q-MDPs with open-loop policies, establishing dynamic and semi-definite programming formulations. Using the duality between these formulations, we prove the existence of optimal stationary open-loop policies and derive an algorithm for finding them.
\item[\ding{43}] Next, we focus on classical-state-preserving closed-loop  policies, obtaining dynamic programming principles and semi-definite programming formulations. We prove the existence of optimal stationary classical-state-preserving closed-loop  policies and develop an algorithm for their computation. We also demonstrate that any q-MDP can be approximated by q-MDPs with classical-state-preserving closed-loop  policies, though higher-dimensional Hilbert spaces may be needed.
\item[\ding{43}] Additionally, we revisit the deterministic reduction of classical MDPs, as covered in Section 9.2 of \cite{BeSh78}, and present a linear programming formulation for this reduction, based on \cite[Chapter 6]{HeLa96}. While these results may not be new, they are crucial for setting up a similar analysis in the quantum case. 
\end{itemize}

\subsection{Some Future Research Directions}

The first research direction we aim to explore is the potential benefits of introducing quantum policies into the quantum formulation of classical MDPs. Specifically, since classical-state-preserving closed-loop  policies encompass classical policies as a subset, we will investigate how the use of classical-state-preserving closed-loop  policies can enhance the optimal cost. This research is intricately connected to nonlocal games and Bell inequalities in quantum information theory \cite{Sca19}, as these problems seek to identify the cost structures that allow quantum policies to outperform classical ones in multi-agent decision scenarios.

Another research direction we are interested in developing the mean-field version of q-MDPs. In this framework, both the state dynamics and the cost function will be influenced by the state density operator, introducing non-linear effects. However, a key challenge in this generalization is determining how to integrate these effects into the quantum channel associated with the state dynamics and the one-stage cost, without compromising the integrity of the quantum formalism. This generalization presents a fascinating and challenging area for future research.

\section{Appendix}\label{appendix}

\subsection{Proof of Proposition~\ref{structure-qmdp}}\label{prop-open}
First of all, partial trace $\tr_{\sA}: \D(\H_{\sX} \otimes \H_{\sA}) \rightarrow \D(\H_{\sX})$ is a quantum channel with the following Kraus representation \cite[Section 4.6.2]{Wil13}
$$
\tr_{\sA}(\sigma) = \sum_{a \in \sA} (\Id \otimes \langle a |) \, \sigma \, (\Id \otimes | a \rangle) \eqqcolon \sum_{a \in \sA} K_a \, \sigma \, K_a^{\dag},
$$
where $\{|a\rangle\}$ is an orthonormal basis for $\H_{\sA}$ and $\Id$ is the identity operator. These Kraus operators act in the following way
\begin{align*}
K_a(|\psi\rangle \otimes |\xi\rangle) \coloneqq \langle a | \xi \rangle \, |\psi\rangle, \,\, K_a^{\dag} |\psi\rangle \coloneqq |\psi\rangle \otimes |a\rangle.
\end{align*}
Let $\gamma: \D(\H_{\sX}) \rightarrow \D(\H_{\sX}\otimes\H_{\sA})$ be a quantum channel such that $\tr_{\sA}(\gamma(\rho)) = \rho$ for all $\rho \in \D(\H_{\sX})$. Hence partial trace is the reverse of the quantum channel $\gamma$. Suppose that $\gamma$ has the following Kraus representation \cite[Theorem 4.4.1]{Wil13}
$
\gamma(\rho) = \sum_{l \in L} K_l \, \rho \, K_l^{\dag},
$
where $K_l \in \cL(\H_{\sX},\H_{\sX}\otimes\H_{\sA})$ for $l \in L$, $L$ is some finite set, and $\sum_{l \in L} K_l^{\dag} \, K_l = \Id$. Then, for any pure state $|\psi\rangle \langle\psi|$, we have 
\begin{align*}
|\psi\rangle \langle\psi| &= \tr_{\sA}(\gamma(|\psi\rangle \langle\psi|)) \\
&= \tr_{\sA}\left(\sum_{l \in L} K_l \, |\psi\rangle \langle\psi| \, K_l^{\dag}\right) \\
&= \sum_{a \in \sA} (\Id \otimes \langle a |) \left(\sum_{l \in L} K_l \, |\psi\rangle \langle\psi| \, K_l^{\dag}\right) (\Id \otimes |a\rangle) \\
&= \sum_{\substack{a \in \sA \\ l \in L}} (\Id \otimes \langle a |) K_l \, |\psi\rangle \langle\psi| \, K_l^{\dag} (\Id \otimes |a\rangle).
\end{align*}
For any $(a,l) \in \sA\times L$, we define 
$
|\xi_{a,l} \rangle \coloneqq (\Id \otimes \langle a |) K_l \, |\psi\rangle \in \H_{\sX}.
$
Then, we have
$
|\psi\rangle \langle\psi| = \sum_{\substack{a \in \sA \\ l \in L}} |\xi_{a,l} \rangle \langle \xi_{a,l}|.
$
It is known that pure states are extreme points of $\D(\H_{\sX})$ \cite[page 46]{Wat18}. Hence, for all $(a,l) \in \sA\times L$, we must have 
$$
\frac{|\xi_{a,l} \rangle \langle \xi_{a,l}|}{\tr(|\xi_{a,l} \rangle \langle \xi_{a,l}|)} = |\psi\rangle \langle\psi|,
$$
for all $|\psi \rangle \in \H_{\sX}$. This implies that for all $(a,l) \in \sA\times L$, 
$
(\Id \otimes \langle a |) K_l  = \alpha_{a,l} \, \Id
$
for some $\alpha_{a,l} \in \rC$. For any fix pure state $|\psi\rangle \in \H_{\sX}$, let $\{|\psi\rangle,|\psi_2\rangle, \ldots,|\psi_{|\sX|}\rangle \}$ be an orthonormal basis for $\H_{\sX}$. Then, consider the following orthonormal basis for $\H_{\sX} \otimes \H_{\sA}$: $\{(|\psi\rangle \otimes |a\rangle)_{a \in \sA}, (|\psi_i\rangle \otimes |a\rangle)_{a \in \sA}, i=2,\ldots,|\sX|\}$. Hence, for any $l \in L$, we can write 
$$
K_l |\psi\rangle = \sum_{k=1}^{|\sX|} \sum_{a \in \sA} \beta_{a,k}^{|\psi\rangle} \, |\psi_k\rangle \otimes |a\rangle, 
$$
where $|\psi_1\rangle \coloneqq |\psi\rangle$. However, for each $a \in \sA$, we also have 
$$
(\Id \otimes \langle a |) K_l |\psi\rangle = \sum_{k=1}^{|\sX|} \beta_{a,k}^{|\psi\rangle} \, |\psi_k\rangle = \alpha_{a,l} \, |\psi\rangle.
$$
Since $\{|\psi\rangle,|\psi_2\rangle, \ldots,|\psi_{|\sX|}\rangle \}$ are linearly independent, we have 
$$
\beta_{a,k}^{|\psi\rangle}  = 0 \,\, \text{if} \,\, k \geq 2 \,\, \text{and} \,\, \beta_{a,k}^{|\psi\rangle} = \alpha_{a,l} \,\, \text{if} \,\, k=1.
$$
This implies that 
\begin{align*}
K_l |\psi\rangle &= \sum_{a \in \sA} \beta_{a,1}^{|\psi\rangle} \, |\psi\rangle \otimes |a\rangle \\
&= \sum_{a \in \sA} \alpha_{a,l} \, |\psi\rangle \otimes |a\rangle \\
&= |\psi\rangle \otimes  \sum_{a \in \sA} \alpha_{a,l} \, |a\rangle 
\eqqcolon |\psi\rangle \otimes |\pi_l\rangle.
\end{align*}
Note that $|\pi_l\rangle$ does not depend on $|\psi\rangle$, and so, 
\begin{align*}
\gamma(|\psi\rangle \langle\psi|) &= \sum_{l \in L} K_l |\psi\rangle \langle\psi| K_l^{\dag} \\
&= \sum_{l \in L} |\psi\rangle \otimes |\pi_l\rangle \langle\psi| \otimes \langle\pi_l| \\
&= \sum_{l \in L} |\psi\rangle \langle\psi|  \otimes |\pi_l\rangle \langle\pi_l| \\
&= |\psi\rangle \langle\psi|  \otimes \sum_{l \in L} |\pi_l\rangle \langle\pi_l| \eqqcolon |\psi\rangle \langle\psi|  \otimes \xi,
\end{align*}
where again $\xi$ does not depend on $|\psi\rangle \langle\psi|$. Obviously, $\xi \in \D(\H_{\sA})$ since $\xi = \tr_{\sX}(\gamma(|\psi\rangle \langle\psi|))$. In conclusion, $\gamma(|\psi\rangle \langle\psi|) = |\psi\rangle \langle\psi|  \otimes \xi$ for some fixed $\xi \in \D(\H_{\sA})$. Since pure states are extreme points of $\D(\H_{\sX})$, we also have $\gamma(\rho) = \rho  \otimes \xi$ for all $\rho \in \D(\H_{\sX})$. This indeed completes the proof as the reverse implication is obvious.

\subsection{Proof of Proposition~\ref{structure-qwmdp}}\label{prop-closed}

Recall that partial trace $\tr_{\sA}: \D(\H_{\sX} \otimes \H_{\sA}) \rightarrow \D(\H_{\sX})$ has the following Kraus representation
$$
\tr_{\sA}(\sigma) = \sum_{a \in \sA} (\Id \otimes \langle a |) \, \sigma \, (\Id \otimes | a \rangle) \eqqcolon \sum_{a \in \sA} K_a \, \sigma \, K_a^{\dag},
$$
where $\{|a\rangle\}$ is an orthonormal basis for $\H_{\sA}$. Let $\gamma: \D(\H_{\sX}) \rightarrow \D(\H_{\sX}\otimes\H_{\sA})$ be a quantum channel such that $\tr_{\sA}(\gamma(\rho)) = \rho$ for all $\rho \in \S$. Hence partial trace is the reverse of the quantum channel $\gamma$ with respect to $\S$. Suppose that $\gamma$ has the following Kraus representation \cite[Theorem 4.4.1]{Wil13}
$
\gamma(\rho) = \sum_{l \in L} K_l \, \rho \, K_l^{\dag},
$
where $K_l \in \cL(\H_{\sX},\H_{\sX}\otimes\H_{\sA})$ for $l \in L$, $|L|< |\sX|^2 |\sA|$, and $\sum_{l \in L} K_l^{\dag} \, K_l = \Id$. Then, for any pure state $|x\rangle \langle x| \in \S$, we have 
\begin{align*}
|x\rangle \langle x| &= \tr_{\sA}(\gamma(|x\rangle \langle x|)) \\
&= \tr_{\sA}\left(\sum_{l \in L} K_l \, |x\rangle \langle x| \, K_l^{\dag}\right) \\
&= \sum_{a \in \sA} (\Id \otimes \langle a |) \left(\sum_{l \in L} K_l \, |x\rangle \langle x| \, K_l^{\dag}\right) (\Id \otimes |a\rangle) \\
&= \sum_{\substack{a \in \sA \\ l \in L}} (\Id \otimes \langle a |) K_l \, |x\rangle \langle x|\, K_l^{\dag} (\Id \otimes |a\rangle).
\end{align*}
For any $(a,l,x) \in \sA\times L \times \sX$, we define 
$
|\xi_{a,l}^x \rangle \coloneqq (\Id \otimes \langle a |) K_l \, |x\rangle \in \H_{\sX}.
$
Then, we have
$
|x\rangle \langle x| = \sum_{\substack{a \in \sA \\ l \in L}} |\xi_{a,l}^x \rangle \langle \xi_{a,l}^x|.
$
It is known that pure states are extreme points of $\D(\H_{\sX})$. Hence, for all $(a,l,x) \in \sA\times L \times \sX$, we must have 
$$
\frac{|\xi_{a,l}^x \rangle \langle \xi_{a,l}^x|}{\tr(|\xi_{a,l}^x \rangle \langle \xi_{a,l}^x|)} = |x\rangle \langle x|.
$$
for all $|x \rangle \in \H_{\sX}$. This implies that for all $(a,l,x) \in \sA\times L \times \sX$, 
$
(\Id \otimes \langle a |) K_l |x\rangle  = \alpha_{a,l}^x \, |x\rangle
$
for some $\alpha_{a,l}^x \in \rC$. Note that $\sum_{\substack{a \in \sA \\ l \in L}} |\alpha_{a,l}^x|^2 = \langle x,x\rangle =1$.

Note that for each $l$,  we have
$
K_l |x\rangle = \sum_{y \in \sX} \sum_{b \in \sA} m_{y,b}^x \, |y,b\rangle. 
$
But since $(\Id \otimes \langle a |) K_l |x\rangle  = \alpha_{a,l}^x \, |x\rangle$, for each $a \in \sA$, we must have $m_{y,a}^x = 0$ if $y \neq x$ and $m_{y,a}^x = \alpha_{a,l}^x$ if $y=x$. Hence, we indeed have
$
K_l |x\rangle = \sum_{b \in \sA} \alpha_{b,l}^x \, |x,b\rangle. 
$
Since $\{|x\rangle\}$ is an orthonormal basis for $\H_{\sX}$, for any pure state $|\psi\rangle \in \H_{\sX}$, we have
$
|\psi\rangle = \sum_{x \in \sX} \langle x,\psi \rangle \, |x\rangle.
$
Hence
$
K_l |\psi\rangle = \sum_{\substack{a \in \sA \\ x \in \sX}}  \langle x,\psi \rangle \, \alpha_{a,l}^x \, |x,a\rangle. 
$
This then implies that 
\begin{align*}
\gamma(|\psi\rangle \langle \psi|) &= \sum_{l \in L} K_l |\psi\rangle \langle \psi| K_l^{\dag} \\ 
&= \sum_{x,y \in \sX} \langle x,\psi \rangle \langle \psi,y \rangle |x\rangle \langle y| \otimes \left(\sum_{a,b \in \sA} \Theta_{a,x,b,y} \, |a\rangle \langle b | \right),
\end{align*}
where $\Theta_{a,x,b,y} \coloneqq \sum_{l \in L} \alpha_{a,l}^x (\alpha_{b,l}^y)^*$ and $\alpha^*$ denotes the complex conjugate of $\alpha$. 
Let $\H_L$ be a Hilbert space of dimension $|L|$ with the following orthonormal basis $\{|l\rangle\}_{l \in L}$. Then define the following vectors in $\H_L$ for each $(x,a) \in \sX \times \sA$,
$
|\phi_{x,a}\rangle \coloneqq \sum_{l \in L} \alpha_{a,l}^x \, |l\rangle.
$
We have $\Theta_{a,x,b,y} = \langle \phi_{y,b},\phi_{x,a} \rangle$. Hence we can write 
\begin{align*}
\gamma(|\psi\rangle \langle \psi|) = \sum_{x,y \in \sX} \langle x,\psi \rangle \langle \psi,y \rangle |x\rangle \langle y| \otimes \left(\sum_{a,b \in \sA} \langle \phi_{y,b},\phi_{x,a} \rangle \, |a\rangle \langle b | \right).
\end{align*}
Note that for each $x \in \sX$, we also have 
$$\sum_{a \in \sA} \langle \phi_{x,a},\phi_{x,a} \rangle = \sum_{\substack{a \in \sA \\ l \in L}} |\alpha_{a,l}^x|^2 =1.$$
Hence, the proof of the 'if' part is complete.

For the proof of the converse, let $\gamma$ be a super-operator from $\H_{\sX}$ to $\H_{\sX} \otimes \H_{\sA}$ of the following form 
\begin{align*}
\gamma(|\psi\rangle \langle \psi|) = \sum_{x,y \in \sX} \langle x,\psi \rangle \langle \psi,y \rangle |x\rangle \langle y| \otimes \left(\sum_{a,b \in \sA} \langle \phi_{y,b},\phi_{x,a} \rangle \, |a\rangle \langle b | \right),
\end{align*}
where $\{|\phi_{x,a} \rangle\}_{(x,a)\in\sX\times\sA}$ is a collection of vectors in some Hilbert space $\H_L$ such that $\sum_{a \in \sA} \langle \phi_{x,a},\phi_{x,a} \rangle=1$ for each $x \in \sX$. We first prove that $\gamma$ is a quantum channel.

Define the isometry $V:\H_{\sX} \rightarrow H_{\sX} \otimes \H_{\sA} \otimes \H_L$ as follows
$$
V: |x\rangle \mapsto |x\rangle \otimes \sum_{a \in \sA} |a\rangle \otimes |\phi_{x,a}\rangle.
$$
Indeed, for each $|x\rangle,|y\rangle \in \H_{\sX}$, we have $\langle x| V^{\dag} V |x\rangle = \sum_{a \in \sX} \langle \phi_{x,a},\phi_{x,a} \rangle =1$ and $\langle x| V^{\dag} V |y\rangle = 0$ if $x \neq y$. Hence $V^{\dag} V = \Id$, and so, $V$ is indeed an isometry. Hence the following super-operator is a quantum channel:
$
\N_V(\rho) = V \rho V^{\dag}.
$
Note that 
\begin{align*}
\tr_L(\N_V(|\psi\rangle \langle \psi|)) &= \tr_L\left(V\left(\sum_{x} \langle x,\psi \rangle |x\rangle \right)\left(\sum_{x} \langle \psi,x \rangle \langle x| \right)V^{\dag}\right) \\
&= \sum_{x,y} \langle x,\psi \rangle \langle \psi,y \rangle \tr_L\left(V |x\rangle  \langle y| V^{\dag}\right) \\
&=  \sum_{x,y} \langle x,\psi \rangle \langle \psi,y \rangle \tr_L\left(|x\rangle \otimes \sum_{a \in \sA} |a\rangle \otimes |\phi_{x,a}\rangle \, \langle y| \otimes \sum_{b \in \sA} \langle b| \otimes \langle\phi_{y,b}|\right) \\
&=  \sum_{x,y} \langle x,\psi \rangle \langle \psi,y \rangle |x\rangle \langle y| \otimes  \sum_{a,b \in \sA}   \langle\phi_{y,b},\phi_{x,a}\rangle |a\rangle \langle b| \eqqcolon \gamma(|\psi\rangle \langle \psi|) 
\end{align*}
This implies that quantum channel $\N_V$ is a Stinespring representation of the super-operator $\gamma$ \cite[Definition 5.2.1]{Wil13}. Therefore, $\gamma$ is also a quantum channel. 

Let us now check the relaxed reversibility condition. Indeed, for any $|z\rangle \langle z| \in \S$, we have
$$
\gamma(|z\rangle \langle z|) = |z\rangle \langle z| \otimes \left(\sum_{a,b \in \sA} \langle \phi_{z,b},\phi_{z,a} \rangle \, |a\rangle \langle b | \right).
$$
Hence 
$
\tr_{\sA}(\gamma(|z\rangle \langle z|) ) = |z\rangle \langle z|  \left(\sum_{a \in \sA} \langle \phi_{z,a},\phi_{z,a} \rangle \right) = |z\rangle \langle z|. 
$
This means that $\gamma$ satisfies relaxed reversibility condition. This completes the proof.


\begin{thebibliography}{10}

\bibitem{Alt03}
{\sc C.~Altafini}, {\em Controllability properties for finite dimensional
  quantum markovian master equations}, Journal of Mathematical Physics, 44
  (2003), pp.~2357--2372.

\bibitem{Alt04}
{\sc C.~Altafini}, {\em Coherent control of open quantum dynamical systems},
  Phys. Rev. A, 70 (2004), p.~062321.

\bibitem{AlTi12}
{\sc C.~Altafini and F.~Ticozzi}, {\em Modeling and control of quantum systems:
  An introduction}, IEEE Transactions on Automatic Control, 57 (2012),
  pp.~1898--1917.

\bibitem{BaBaAa14}
{\sc J.~Barry, D.~T. Barry, and S.~Aaronson}, {\em Quantum partially observable
  {M}arkov decision processes}, Phys. Rev. A, 90 (2014), p.~032311.

\bibitem{BeSh78}
{\sc D.~P. Bertsekas and S.~E. Shreve}, {\em Stochastic optimal control: The
  discrete time case}, Academic Press New York, 1978.

\bibitem{BoSiSu221}
{\sc U.~Boscain, M.~Sigalotti, and D.~Sugny}, {\em Introduction to the
  pontryagin maximum principle for quantum optimal control}, PRX Quantum, 2
  (2021), p.~030203.

\bibitem{BrPe07}
{\sc H.-P. Breuer and F.~Petruccione}, {\em {The Theory of Open Quantum
  Systems}}, Oxford University Press, 2007.

\bibitem{Dal08}
{\sc D.~D'Alessandro}, {\em Introduction to Quantum Control and Dynamics},
  Chapman \& Hall/CRC Applied Mathematics \& Nonlinear Science, CRC Press,
  2008.

\bibitem{DoPe22}
{\sc D.~Dong and I.~Petersen}, {\em Quantum estimation, control and learning:
  Opportunities and challenges}, Annual Reviews in Control, 54 (2022),
  pp.~243--251.
  
\bibitem{GoBeSm05}
{\sc J.~Gough, V.~P. Belavkin, and O.~G. Smolyanov}, {\em
  Hamilton-jacobi-bellman equations for quantum optimal feedback control},
  Journal of Optics B: Quantum and Semiclassical Optics, 7 (2005), p.~S237.
  
\bibitem{Gri09}
{\sc J.R.~Grice}, {\em Discrete Quantum Control}, University of California, San Diego, PhD Thesis (2009).

\bibitem{HeLa96}
{\sc O.~Hern\'andez-Lerma and J.~Lasserre}, {\em Discrete-Time {M}arkov Control
  Processes: Basic Optimality Criteria}, Springer, 1996.

\bibitem{BiGo12}
{\sc B.~Hwang and H.-S. Goan}, {\em Optimal control for non-markovian open
  quantum systems}, Phys. Rev. A, 85 (2012), p.~032321.

\bibitem{Lin76}
{\sc G.~Lindblad}, {\em {On the generators of quantum dynamical semigroups}},
  Communications in Mathematical Physics, 48 (1976), pp.~119 -- 130.

\bibitem{NiCh02}
{\sc M.~Nielsen and I.~Chuang}, {\em Quantum computation and quantum
  information}, Cambridge University Press, Cambridge, 2010.

\bibitem{Par92}
{\sc K.~Parthasarathy}, {\em An Introduction to Quantum Stochastic Calculus},
  Birk\"ahuser Basel, 1992.

\bibitem{PeDaRa88}
{\sc A.~P. Peirce, M.~A. Dahleh, and H.~Rabitz}, {\em Optimal control of
  quantum-mechanical systems: Existence, numerical approximation, and
  applications}, Phys. Rev. A, 37 (1988), pp.~4950--4964.

\bibitem{SaYu22}
{\sc N.~Saldi and S.~Y{\"u}ksel}, {\em {Geometry of information structures,
  strategic measures and associated stochastic control topologies}},
  Probability Surveys, 19 (2022), pp.~450 -- 532.

\bibitem{SaYuLi16}
{\sc N.~Saldi, S.~Yuksel, and T.~Linder}, {\em Near optimality of quantized
  policies in stochastic control under weak continuity conditions}, Journal of
  Mathematical Analysis and Applications, 435 (2016), pp.~321--337.

\bibitem{SaYuLi17}
{\sc N.~Saldi, S.~Y\"{u}ksel, and T.~Linder}, {\em On the asymptotic optimality
  of finite approximations to markov decision processes with borel spaces},
  Mathematics of Operations Research, 42 (2017), pp.~945--978.
  
\bibitem{Sca19}
{\sc V.~Scarani}, {\em Bell Nonlocality}, Oxford University Press, 2019.

\bibitem{Shi13}
{\sc M.~E. Shirokov}, {\em Reversibility conditions for quantum channels and
  their applications}, Sbornik: Mathematics, 204 (2013), p.~1215.

\bibitem{TiVi09}
{\sc F.~Ticozzi and L.~Viola}, {\em Analysis and synthesis of attractive
  quantum markovian dynamics}, Automatica, 45 (2009), pp.~2002--2009.

\bibitem{Wat18}
{\sc J.~Watrous}, {\em The Theory of Quantum Information}, Cambridge University
  Press, 2018.

\bibitem{Wil13}
{\sc M.~M. Wilde}, {\em Quantum Information Theory}, Cambridge University
  Press, 2013.

\bibitem{YiFeYi21}
{\sc M.~Ying, Y.~Feng, and S.-G. Ying}, {\em Optimal policies for quantum
  {M}arkov decision processes}, International Journal of Automation and
  Computing, 18 (2021), pp.~410--421.

\bibitem{YiYi18}
{\sc S.~Ying and M.~Ying}, {\em Reachability analysis of quantum {M}arkov
  decision processes}, Information and Computation, 263 (2018), pp.~31--51.

\bibitem{ZhRa98}
{\sc W.~Zhu and H.~Rabitz}, {\em A rapid monotonically convergent iteration
  algorithm for quantum optimal control over the expectation value of a
  positive definite operator}, The Journal of Chemical Physics, 109 (1998),
  pp.~385--391.

\end{thebibliography}

\end{document}